\newtheorem{col}{Corollary}
\newtheorem{pro}{Proposition}
\newtheorem{mydef}{Definition}
\newtheorem{lem}{Lemma}
\begin{document}

\title{Repair for Distributed Storage Systems \\in Packet Erasure Networks}

\author{Majid Gerami, Ming Xiao, Jun Li, Carlo Fischione, Zihuai Lin
\thanks{This work was supported partly by Wireless@KTH for the seed project,
entitled "Efficient content storage and dissemination in heterogeneous wireless
networks". Part of this paper was presented in IEEE International Conference on Communications, ICC, June 2013 \cite{Maj02}.

M. Gerami, M. Xiao, and C. Fischione are with the School of Electrical Engineering of KTH (The Royal Institute of Technology), Stockholm, Sweden (e-mail: \{gerami, mingx, carlofi\}@kth.se).

J. Li and Z. Lin are with the School of Electrical and Information Engineering, University of Sydney, Australia (e-mail: \{jun.li1, zihuai.lin\}@sydney.edu.au.).

}}


\maketitle
\begin{abstract}
\normalsize
Reliability   is essential for storing files in many applications of distributed storage systems. To maintain reliability, when a storage node fails, a new node should be regenerated by a repair process.  Most of the previous results on the repair problem assume perfect (error-free) links in the networks. However, in practice, especially in a wireless network, the transmitted packets (for repair) may be lost due to, e.g., link failure or buffer overflow.  We study the repair problem of distributed storage systems in packet erasure networks,  where  a packet loss is modeled as an erasure. The minimum  repair-bandwidth, namely the amount of information sent from the surviving nodes to the new node, is established under the ideal assumption of infinite number of  packet transmissions. We also study the bandwidth-storage tradeoffs in erasure networks. Then, the use of  repairing storage nodes (nodes with smaller storage space) is proposed to reduce the repair-bandwidth. We study the minimal storage of repairing storage nodes. For the case of a finite number of packet transmissions, the probability of successful repairing is investigated. We show that the repair with a finite number of  packet transmissions may use much larger bandwidth than the minimum  repair-bandwidth. Finally,  we propose a  combinatorial optimization problem, which results in the optimal repair-bandwidth  for the given packet erasure probability and finite packet transmissions.
\end{abstract}

\begin{keywords}
\noindent Network Coding, Distributed Storage Systems, Packet Erasure Channels, Optimization.
\end{keywords}
%

\IEEEpeerreviewmaketitle

\section{Introduction}\label{sec:intro}
\IEEEPARstart{D}{istributed} storage systems have recently attracted substantial research interest for many applications such as file sharing, cloud storage and data centers. Although these research results are mostly for wired networks,  distributed storage systems can also be applied in wireless networks. For instance, consider a scenario of applying a distributed storage system in a delay tolerant network (DTN) with wireless channels, as shown in  Fig. \ref{Fig.1WirelessDTN}. A DTN is a  network in which there might not be direct links between a source and destinations, and  services can tolerate an acceptable level of incurred delay \cite{Jain01}. In the scenario  in Fig. \ref{Fig.1WirelessDTN},  a base station distributes a source file within the mobile nodes, which may move towards arbitrary directions.  Then a data collector (DC) can rebuild the source file by meeting  a certain number of these mobile storage nodes.

In distributed storage systems, it is important to keep the stored file reliable even if storage nodes are unreliable.  Unreliability in  storage nodes might stem from disk failure or power off. However, the failure might not be limited only to those cases. For instance, in the network in Fig.  \ref{Fig.1WirelessDTN},  a mobile node leaving the system can be considered as node failure. Since node failure may happen frequently, a distributed storage system needs a mechanism to maintain the reliability, namely, to regenerate a new node. Such a mechanism of a new node being regenerated is called a repair process. In this process, surviving nodes transmit sufficient packets to the new node for repair.

\begin{figure}
\centering
\psfrag{s}[][][3.5]{ $S$ }
\psfrag{A}[][][3.5]{DC }
\psfrag{node1}[][][2.5]{node 1 }
\psfrag{node2}[][][2.5]{node 2 }
\psfrag{node3}[][][2.5]{node 3 }
\psfrag{node4}[][][2.5]{node 4 }
\psfrag{mk}[][][3.0]{$M/k$ }
\resizebox{8cm}{!}{\epsfbox{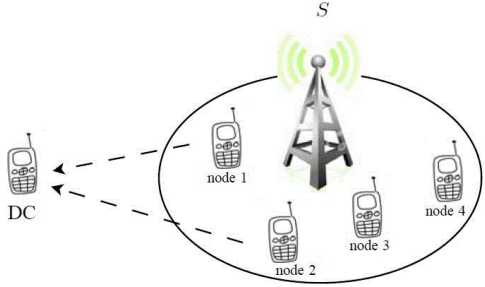}}
\caption{A distributed storage system in a wireless  network. The base station distributes a file among storage nodes on its coverage area. A data collector (DC) can recover the source file, even though it is out of base station coverage by meeting a sufficient number of storage nodes. In this example, the source file is coded by an ($n=4, k=2$)-MDS code. Thus, the DC by downloading from  any two  storage nodes can rebuild the source file.}
\label{Fig.1WirelessDTN}
\end{figure}

In wireline and wireless networks, there may also be unreliability in transmission links. In wireline networks, the transmitted packets may be lost due to  buffer overflow in an intermediate node \cite{Lun01}. In wireless networks, packets may be lost due to channel unreliability (caused by e.g., fading or interference). Thus,  not only storage nodes,  but also links between nodes can be unreliable. In such an environment, designing a robust distributed storage system is challenging.

We model a packet loss as an erasure, and then study the repair problem in packet erasure networks. We aim to minimize the amount of information that is sent from the surviving nodes to the new node, namely \emph{repair-bandwidth}, in packet erasure networks. In these networks, we derive a lower bound for the repair-bandwidth  and then  prove that the the lower bound is achievable by the use of random linear codes. We show that the minimum  repair-bandwidth can be achieved when the number of packet transmissions from each surviving node to the new node  tends to infinity (asymptotic analysis). By asymptotic analysis, we propose a method of exploiting limited-capacity storage nodes, termed as \emph{repairing storage nodes}, to reduce the repair-bandwidth. We study the minimal storage of repairing storage nodes.  We also show that if the number of packet transmissions is finite, the repair-bandwidth can be much larger than that of the asymptotic analysis. We term this repair-bandwidth as \emph{practical repair-bandwidth}.  Then we  propose a method to minimize the practical repair-bandwidth by studying the reliability of repair, i.e., the probability of successful repair. Our study shows that  a method with lower asymptotic repair-bandwidth is not necessarily  a better approach in reducing the practical repair-bandwidth and a combinatorial optimization problem can find the optimized approach based on the link packet erasure probabilities.

The  rest of the paper is structured as follows. In Section \ref{background}, the background  and the related works are discussed. Then in Section \ref{sec:Tradoff}, we study the fundamental optimal bandwidth-storage tradeoff in packet erasure networks.  In Section \ref{Sec:RepHelper}, we propose using repairing storage nodes  to reduce the repair-bandwidth. Next, in Section \ref{sec:ReduceFinite}, we study  methods to minimize the repair-bandwidth for the case of a finite number of packet transmissions. Finally we conclude the paper in Section \ref{sec:conclusion}.

\section{Background}\label{background}
Erasure codes are generally used for protecting files in  storage systems. For an $(n,k)$-erasure code, a source file is divided into $k$ parts (also called packets or fragments) and then encoded to $n$ parts. Suppose the source file, denoted by   vector $\mathbf{m}=[ m_1,...,m_k]^{T}$,\footnote{Superscript $T$ denotes transpose operation.} contains $k$ packets.  Each packet $m_i, (i \in [k]$\footnote{Notation $[k]$ denotes the set $\{1,\cdots,k\}$.}) consists of elements from the Galois field $\mathrm{GF(q)}$, where $q$ represents the  field size. Note that in this paper, to simplify illustration, we assume all packets have the same length and packets are the basic information unit.  If vector $\mathbf{x}=[X_1, X_2,\cdots, X_n]$ denotes the encoded vector then $X_i=\mathbf{g_i}^T\mathbf{m}, (i \in [n])$, where $\mathbf{g_i}$ is a $k$ dimensional column vector with elements from the Galois field $\mathrm{GF(q)}$. In the network coding literature, vector $\mathbf{g_i}$ is known as \textit{global encoding vector} \cite{Raymond}. Consequently, a data collector, which is interested to download the source file,  can reconstruct the file by receiving $k$ packets having independent global vectors, and then performing decoding e.g., Gaussian elimination to recover $\mathbf{m}$.

The maximum reliability for an $(n,k)$-erasure code is achieved when every $k$ out of $n$ packets can rebuild the source file. This property of reconstructing the file from any $k$ out of $n$ packets is termed as \emph{reconstruction property}. MDS (Maximum Distance Separable) codes are erasure codes satisfying the reconstruction property. By encoding a file of size $M$ by an $(n,k)$-MDS codes, each node stores $M/k$ packets in a distributed storage system, which is the minimum storage to reconstruct the file by $k$ nodes. As an example, consider the scenario in Fig. \ref{Fig.1WirelessDTN}, a file of size $M=4$ packets is encoded by a $(4, 2)$-MDS code. Thus each node stores $M/k=2$ packets and a sink (or a data collector) can rebuild the source file by meeting at least $k=2$ storage nodes.

When a node fails, to maintain the reliability of the system,  a new node is regenerated.  In the regenerating process, the surviving nodes transmit sufficient data to the new node such that the system with the new node still maintains the reconstruction property. Yet the new node may have different coded symbols comparing to the failed one.  This is called  \emph{functional repair}. In another kind of repair, which is known as  \emph{exact repair}, the new node is exactly the same as the failed node. In this paper, we only consider the functional repair. Yet, the results can be extended to the exact repair.

Although MDS erasure codes are efficient in the use of storage space for providing reliability, they are not efficient in the number of packets to download for repair, called \emph{repair-bandwidth}. A naive approach for repair  is to rebuild the source file (by downloading $M$ packets from the surviving nodes) and then regenerate the new node (e.g., exactly as the failed one). However, downloading $M$ packets for regenerating $M/k$ packets (which one node stores) is not efficient in term of the repair-bandwidth. Reference \cite{Dimk01} models the repair process by an information flow graph as a multicast problem in error-free networks (\textit{lossless networks}). Cut-set bound analysis on the information flow graph shows that  a sink (or data collector) can download the original file of size $M$ if
\begin{eqnarray}
\sum_{i=0}^{k-1} \min \{ \alpha, (d-i)\beta\} \geq M,
\label{Eq:bound}
\end{eqnarray}
where $\alpha$ denotes the individual node's storage capacity, and $d$ denotes the number of surviving nodes involving in the repair ($d \geq k$), and $\beta$ denotes the repair traffic sent by each of the nodes among $d$ surviving nodes.

\begin{mydef} Capacity of the distributed storage system:
 the term $\sum_{i=0}^{k-1} \min \{ \alpha, (d-i)\beta\}$ under the dynamic of node failure/repair. For the optimal codes $\sum_{i=0}^{k-1} \min \{ \alpha, (d-i)\beta\}=M$.
 \end{mydef}

Here the dynamic of node failure/repair means a process in which nodes continuously fail and repair.The authors in \cite{Dimk01} derived an explicit form of relation between $\alpha, \gamma=d\beta, d,$ and $k$ for the points on the fundamental optimal bandwidth-storage tradeoff (tradeoff between $\alpha$ and $\gamma$). The codes achieving the optimal tradeoff are called  \textit{regenerating codes}. The results in  \cite{Dimk01} show that by increasing the capacity of storage nodes  the repair-bandwidth can be reduced.

The codes achieving two extreme points on the fundamental bandwidth-storage tradeoff are termed as \textit{minimum storage regenerating (MSR)} and \textit{minimum bandwidth regenerating (MBR)} codes. These two points can also be derived by two sequential optimization processes under constraint (\ref{Eq:bound}).  MSR codes are achieved by first minimizing the storage and then minimizing the repair-bandwidth. The minimum storage capacity required for the reconstruction property is $M/k$. Thus, storage nodes by MSR codes store the same amount of data as the MDS codes. However, MSR codes have the minimum bandwidth in regenerating a new node. From the bound in (\ref{Eq:bound}), we can derive the minimum  repair-bandwidth for an MSR code as
\begin{eqnarray}
\alpha_{\mathrm{MSR}}=&\frac{M}{k}, \nonumber  \\
\gamma_{\mathrm{MSR}}=&\frac{Md}{k(d-k+1)}.
\label{Eq-MSR1}
\end{eqnarray}

In the optimization process, if we first minimize the repair-bandwidth and then storage per each node, another extreme point, the MBR point, is achieved. It can easily be verified that in general $\gamma \geq \alpha$ \cite{ShahJ12}. For  MBR codes $\gamma=\alpha$. Therefore, setting  $\gamma=d\beta=\alpha$ on the optimum bound $\sum_{i=0}^{k-1} \min \{ \alpha, (d-i)\beta\} = M$  yields
\begin{eqnarray}
\alpha_{\mathrm{MBR}}=&\frac{2Md}{k(2d-k+1)}, \nonumber  \\
\gamma_{\mathrm{MBR}}=&\frac{2Md}{k(2d-k+1)}.
\label{Eq-MBR1}
\end{eqnarray}

References \cite{Wu01}, \cite{Wu02}, \cite{Rashmi01} studied the code construction and achievablity of the functional and exact repair. In \cite{Anne01}, cooperative regenerating codes are proposed to reduce the bandwidth in the scenario of multiple-node failure. References \cite{Maj01, Maj03} suggest surviving node cooperation in order to minimize the cost of repair in  multi-hop networks. Yet, in most of the previous work of regenerating codes, it is assumed that links between storage nodes are perfect, without any error or erasure. In distributed storage systems, especially for those in wireless networks (as the example in Fig. 1), packets on the channels may be lost due to physical layer errors (e.g., channel fading and interference) or network layer errors (e.g., buffer overflow). Then, the redundant data needs to be transmitted for repair. Recently reference \cite{Rashmi02} has suggested a regenerating code which is resistant to a specific number of path failures by requesting more nodes to join the repair process. Particularly,  for the code resistant to $d_2$ number of path failures, it is required to transmit from $d_{\mathrm{tot}}=d_1+d_2$ surviving nodes instead of $d_1$ nodes ($d_1$ nodes are assumed to be sufficient for repair with the perfect channels). However, reference \cite{Rashmi02} has not considered transmission errors of individual packet and how to achieve optimal repair-bandwidth under such scenarios. We call a regenerating process  successful when the new node together with the surviving nodes has the reconstruction property. Also, reference \cite{Rashmi02} has not studied the probability of successful repair and how to construct the optimal codes (to reduce the repair bandwidth). We will study the optimal codes to achieve the minimal repair bandwidth. We shall consider the probability of successful repair and show that  the optimal $d_1 \text{ and } d_2$ depend on the erasure probability of the links. Thus, we can find the code maximizing the probability of successful repair, given the constraints of repair-bandwidth.

Other related results on network coding for erasure networks are as follows.  The capacity of wireless erasure  networks has been studied in \cite{Dana01}.  It is shown in \cite{Lun01} that the capacity of packet erasure networks can be achieved by random linear codes.  In references \cite{Jain02}, \cite{Leong01}, the probability of successful reconstruction of a source file is studied in erasure networks. However, the papers did not study the regenerating problem. References \cite{Jain01}, \cite{Jain02} study delay tolerant networks. In \cite{Jain02}, the authors show that there is no unique answer to the question whether coding  for distributing a file  maximizes the probability of successful reconstruction or not.


For the notation, we follow  the same notations as  \cite{Dimk01}. That is, in a distributed storage system characterized by  parameters $(n, k, d, \alpha, \gamma, M)$, $n$ denotes  the number of nodes; $k$ denotes the reconstruction parameter (every $k$ nodes can rebuild the original file); $\alpha$ denotes the storage capacity of each node; $d$  denotes  the number of surviving nodes helping the repair (here considering only $k\leq d \leq n-1$);  $\gamma$ denotes the repair-bandwidth and $M$ denotes the size of the source file. The notations are also listed in Table~\ref{table:notation}. In this paper, we assume $\alpha$, $\beta$ and $\alpha/\beta$ are positive integers.

\begin{table}[t]
\caption{Notations} 
\centering 
\begin{tabular}{c l } 
\hline\hline 
Symbol & Definition \\[0.5ex] 
\hline 
$n$ & the number of storage nodes in a distributed storage\\ & system\\ 
$k$ & the minimum number of nodes needed to  reconstruct \\ &the original file \\
$d$ & the number of surviving nodes for a repair process ($k\leq d \leq n-1$) \\
$\alpha$ &  individual node storage capacity \\
$\beta$ & number of packets from each  surviving node \\ & in the repair process \\
$\gamma$ & repair-bandwidth ($\gamma=d\beta$)\\
$M$ & size of the source  file \\
[1ex]
\hline 
\end{tabular} \label{table:notation} 
\end{table}

\section{Repair in erasure networks}\label{sec:Tradoff}

In this section, we study the repair process in a packet erasure network. The system model is as follows. When a node fails, a new node is regenerated by the help of $d$ surviving nodes. There are direct but lossy links from the surviving nodes to the new node. The transmitted packets on the links might be erased with a packet erasure probability $p$.  To simplify analysis, we assume all the links have equal packet erasure probabilities. In addition, we assume that the erasure events for different packets are independent identically distributed (i.i.d). When a packet is erased all the content of the packet will be lost. Otherwise  all the content will be received correctly. We shall analyze the required repair-bandwidth in packet erasure networks  in what follows.

For a point-to-point channel with i.i.d. packet erasure and probability $p$, the capacity of channel is $1-p$, which is derived from the capacity of binary erasure channels as described in \cite{Cover01}. The capacity is achieved when the number of packet transmissions tends to infinity. That is, to correctly receive $N$  packets in a packet erasure channel with an erasure probability $p$, a transmitter must send $N/(1-p)$ packets, if $N$ tends to infinity.
\begin{figure*}
 \centering
 \psfrag{S}[][][2.5]{ $S$ }
  \psfrag{newnode}[][][2.5]{ new node}
  \psfrag{DC}[][][2.5]{ $DC$ }
  \psfrag{a}[][][2.5]{ $\alpha$ }
   \psfrag{b}[][][2.5]{ $\beta^{'}, p$ }
  \psfrag{cut}[][][2.5]{ $Cut$ }
   \psfrag{infty}[][][2.5]{ $\infty$ }
   \psfrag{vdot}[][][2.5]{ $\vdots$ }
   \psfrag{incdot}[][][2.5]{ $\ddots$ }
    \psfrag{in}[][][2.5]{$in$ }
    \psfrag{out}[][][2.5]{$out$ }
         \psfrag{stage0}[][][1.5]{stage 0 }
        \psfrag{stage1}[][][1.5]{stage 1 }
         \psfrag{hdots}[][][1.5]{$\ldots $}
 \resizebox{10cm}{!}{\epsfbox{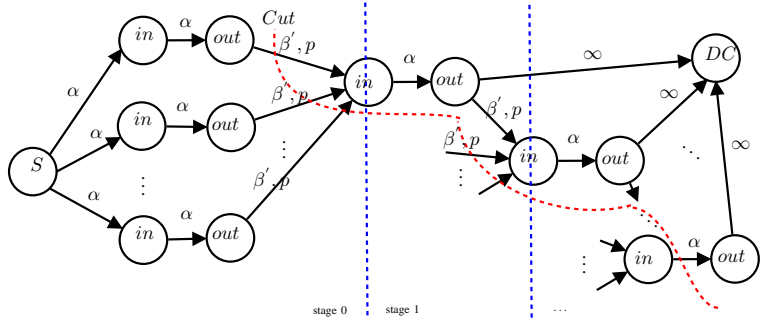}}
\caption{Information flow graph for distributed storage systems in packet erasure networks. Each of $d$ surviving nodes transmits $\beta^{'}$ packets to the new node. The links from the surviving nodes to the new node have equal packet erasure probability, $p$. Due to the erasures on the links,  the new node receives  $(1-p)\beta^{'}$ in average, from each link.}
 \label{Erasureflowgraph}
\end{figure*}

For  multicast in a packet erasure network, an upper bound of information rate from a source to all the destinations (or data collectors)  can be derived by  cut-set bound analysis \cite{Cover01, Raymond}. A cut refers to dividing the set of nodes in the network into two complement sets of $U$ and $\overline{U}$ such that one set contains the source and another contains sinks. The cut capacity is the sum of capacities of the edges from the set containing the source to the complement set. From the cuts, a cut with the minimum capacity, termed as \emph{min-cut}, determines the maximum rate of information from a source to destinations.  The maximum rate can be achieved asymptotically, for instance, by the use of random linear network codes \cite{Lun02}. Although the rate can also be achieved by retransmission when perfect feedback channels are available, it may be difficult due to the incurred delay or complexity of feedback in multicast networks \cite{Lun02}.

The maximum information rate between a source and data collectors in a distributed storage system under the dynamic of node failure/repair can be derived by the cut-set analysis on the information flow graph. The information flow graph is a directed acyclic graph on which a source file is transmitted to the data collectors in the presence of (potentially) infinite number of node failure/repair. Each storage node is modeled by two nodes, $in$ and $out$ nodes, which are connected by a link of capacity $\alpha$. A source node, denoted as $S$, contains a file of size $M$ packets. The source is connected to $in$ nodes  by links of capacity $\alpha$. When a node fails, $d$ surviving nodes send packets to the new node.  The erasure channel from each surviving node to the new node is represented by a link with capacity $(1-p)\beta^{'}$, where $\beta^{'}$ denotes  the number of packets transmitted from each node among $d$ surviving nodes, and $p$ denotes the packet erasure probability. A data collector (DC) by connecting at-least $k$ nodes (through infinite-capacity links) can reconstruct the original file. An information flow graph is shown in Fig. \ref{Erasureflowgraph}. The following proposition characterizes the capacity of a distributed storage system in packet erasure networks.

\begin{pro} The capacity of a distributed storage system with  parameters $(n, k, d, \alpha, \beta^{'}, M)$ having (potentially) infinite number of node failure/repair, and links with packet erasure probability $p$ is  given by
\begin{eqnarray}
\mathcal{C}=\sum_{i=0}^{k-1} \min\{\alpha, (d-i)(1-p)\beta^{'} \}.
\end{eqnarray}
\end{pro}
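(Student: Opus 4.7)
The plan is to adapt the information-flow-graph cut-set analysis of \cite{Dimk01} to the erasure setting. The only structural change relative to the lossless case is that each repair edge from a surviving node to the new node now has effective capacity $(1-p)\beta'$ rather than $\beta'$, since, as noted just before the proposition, a point-to-point packet erasure channel with i.i.d.\ erasure probability $p$ has asymptotic capacity $(1-p)$ per channel use. All other edges (the $\alpha$-valued internal node edges and the infinite-capacity edges feeding the data collector) remain unchanged.

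For the converse, I would consider an arbitrary data collector DC that connects to some $k$ of the currently active storage nodes in the flow graph of Fig.~\ref{Erasureflowgraph}, and compute an upper bound on the min-cut separating the source $S$ from DC under any sequence of failure/repair events. Following the combinatorial argument of \cite{Dimk01}, one topologically orders the $k$ chosen nodes; at the $i$-th such node ($i=0,\dots,k-1$) an $S$--DC cut must either sever the internal $\alpha$-edge of that node, or sever all of its incoming repair edges, of which at most $d-i$ remain "fresh" after accounting for the previously processed nodes. With erasures, the latter option contributes at most $(d-i)(1-p)\beta'$ instead of $(d-i)\beta'$. Taking the smaller of the two options at each node and summing over $i$ yields the claimed expression as an upper bound on capacity.

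For achievability, I would invoke the random linear network coding result of \cite{Lun01,Lun02}, which shows that random linear codes asymptotically achieve the min-cut rate of any packet erasure network. Treating the multicast from $S$ to every admissible DC over the flow graph and applying random linear coding across packets at each storage node, any rate strictly below $\mathcal{C}$ is achievable with vanishing decoding error probability as the number of transmitted packets per link tends to infinity. This matches the upper bound and establishes that $\mathcal{C}$ is indeed the capacity.

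The main obstacle is verifying that the min-cut of the erasure-weighted flow graph has the clean separable form $\sum_{i=0}^{k-1}\min\{\alpha,(d-i)(1-p)\beta'\}$, i.e.\ that rescaling every repair edge by $(1-p)$ does not alter which family of cuts is optimal and that the same per-node decomposition continues to dominate. Since erasures act uniformly on all repair edges and the underlying combinatorial optimization is the same as in \cite{Dimk01}, this should reduce to the lossless argument with $\beta$ replaced by $(1-p)\beta'$; the more subtle point is ensuring that the achievability scheme simultaneously respects the regenerating-code structure \emph{across} failure/repair generations and the random erasures \emph{within} each link, which is precisely what the asymptotic random-linear-coding guarantee of \cite{Lun02} delivers.
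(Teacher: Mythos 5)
Your proposal is correct and follows essentially the same route as the paper: both adapt the information-flow-graph cut-set analysis of \cite{Dimk01} with each repair edge rescaled to capacity $(1-p)\beta^{'}$, identify the min-cut as $\sum_{i=0}^{k-1}\min\{\alpha,(d-i)(1-p)\beta^{'}\}$ via the same stage-by-stage decomposition, and invoke the asymptotic random linear network coding guarantee for packet erasure networks \cite{Lun02} for achievability. No substantive differences to report.
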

\begin{proof}
We show that the upper bound of the information rate in such a multicast network equals $\sum_{i=0}^{k-1} \min\{\alpha, (d-i)(1-p)\beta^{'} \}$. For that, we follow the approach adopted  in \cite{Dimk01}. We thus show that the min-cut in the information flow graph equals $\mathcal{C}$. Since the information flow graph is direct acyclic graph, then there is a topological sorting \cite{Raymond}. Hence, the information flow graph after $k$  failure/repair processes can be shown in $k$ subsequent stages, as depicted in Fig. \ref{Erasureflowgraph}. We first prove there is a cut in the network with capacity $\mathcal{C}$. For the purpose, consider a cut that passes  a route with a minimum capacity at any stage of repair. For example at stage $1$, the cut selects a route between $\alpha$ and $d(1-p)\beta^{'}$, where $(1-p)\beta^{'}$ is the capacity of an erasure link between one of the surviving nodes and the new node. At stage $2$, since the new node can get $(1-p)\beta^{'}$ packets from previously regenerated node, then the cut selects between $\alpha$ and $(d-1)(1-p)\beta^{'}$, and so on. Finally, there will be a graph with a cut capacity equivalent to $\sum_{i=0}^{k-1} \min\{\alpha, (d-i)(1-p)\beta^{'} \}$. Any other cut has capacity greater than, or equal to, $\mathcal{C}$. This states $\mathcal{C}$ as the min-cut of the network.

For the achievability,  random network codes can be used for encoding the data on the storage and also for transmitting the repair traffic from the surviving nodes to the new node. Since random network coding is capacity approaching for the multicast in packet erasure networks \cite{Lun02}, we can use the  codes in our proof. Then the capacity is achievable, conditioning that the field size of  codes is large enough, and the number of packet transmissions tends to infinity.
\end{proof}

Alternatively, for the achievability, we can construct the regenerating codes in packet erasure networks by first constructing a regenerating code $(n, k, d, \alpha, \beta=\beta^{'}(1-p), M)$ in a lossless network. For  given $\alpha$ and $\beta^{'}$ on the optimal bound, the constraint $\sum_{i=0}^{k-1} \min \{ \alpha, (d-i)(1-p)\beta^{'} \} =M$ yields $\sum_{i=0}^{k-1} \min \{ \alpha, (d-i)\beta \} = M$. Hence, we can construct an $(n,k,d,\alpha,\beta,M)$ regenerating codes, e.g. as in \cite{Dimk01},\cite{Wu01}. Next,  we use random linear network coding for transmitting packets from $d$ surviving nodes to the new node. That is, each surviving node linearly combines its $\beta$ repair packets by  random coding coefficients, which are uniformly selected from $\mathrm{GF(q)}$.  The new node recovers $\beta$ packets  from each link after receiving $\beta$ packets having independent global vectors. For that, each surviving node transmits $\beta/(1-p)$ packets.

Consequently, it turns out that the asymptotic repair-bandwidth (when the number of repair packets goes to infinity), which is denoted as $\gamma^{'}$, for a given $\alpha$  equals $\gamma^{'}=d\beta^{'}=d\beta/(1-p)$. The optimal repair-bandwidth for varying individual node storage capacities can thus be derived by multiplying the  factor $1/(1-p)$ to the fundamental repair-bandwidth in lossless network (derived in \cite{Dimk01}). This is expressed in the following corollary.

\begin{col} \label{Proposition:Regbound}
Let $\gamma^{'}$ denote the repair-bandwidth in a packet erasure network. Then, the storage capacity of each node, $\alpha$ is derived  as  a function of parameters $n,k,d,\gamma^{'},p$ as
\begin{eqnarray}
\alpha=
\begin{cases} \frac{M}{k} & \text{  if    } \gamma^{'} \in \left[\frac{f(0)}{1-p},+ \infty \right), \\
\frac{M-g(i)(1-p) \gamma^{'}}{k-i} & \text{  if    } \gamma^{'} \in \left[\frac{f(i)}{1-p},\frac{f(i-1)}{1-p}\right),\nonumber \\& \text{  } i=1,...,k-1,  \end{cases}
\label{Eq-Regbound}
\end{eqnarray}
where
\begin{eqnarray}
f(i)=\frac{2Md}{(2k-i-1)i+2k(d-k+1)},
\end{eqnarray}
and
\begin{eqnarray}
g(i)=\frac{(2d+2k+i+1)i}{2d}.
\end{eqnarray}
\end{col}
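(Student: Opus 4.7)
The plan is to reduce the erasure-network tradeoff to the classical lossless tradeoff of \cite{Dimk01} through a simple change of variables. An optimal code saturates the capacity expression of Proposition 1, so that
\begin{equation*}
\sum_{i=0}^{k-1} \min\bigl\{\alpha,\; (d-i)(1-p)\beta'\bigr\} \;=\; M.
\end{equation*}
I would introduce auxiliary variables $\beta := (1-p)\beta'$ and $\gamma := d\beta = (1-p)\gamma'$. Under this substitution the equality becomes the familiar lossless identity $\sum_{i=0}^{k-1} \min\{\alpha,\,(d-i)\beta\} = M$, whose optimal $\alpha$--$\gamma$ tradeoff was solved explicitly in \cite{Dimk01} and takes a piecewise form for $\alpha$ as a function of $\gamma$, with break-points at $\gamma = f(i)$ and $i$-th slope coefficient $g(i)$.

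Next I would substitute $\gamma = (1-p)\gamma'$ back into that piecewise formula. Each break-point $\gamma = f(i)$ maps to $\gamma' = f(i)/(1-p)$, which reproduces exactly the interval endpoints appearing in the statement of the corollary. On the $i$-th segment the lossless expression reads $\alpha = (M - g(i)\gamma)/(k-i)$; replacing $\gamma$ by $(1-p)\gamma'$ yields $\alpha = (M - g(i)(1-p)\gamma')/(k-i)$, matching the claimed formula. The top piece $\alpha = M/k$ is unchanged because it does not depend on $\gamma$.

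Achievability of the resulting $(\alpha,\gamma')$ pairs is already sketched in the paragraph preceding the corollary: one first constructs a regenerating code with parameters $(n,k,d,\alpha,\beta,M)$ as in \cite{Dimk01} or \cite{Wu01}, and then each helper streams its $\beta$ repair packets over its lossy link using a random linear network code over a sufficiently large field, which saturates the erasure-channel capacity asymptotically in the packet-block length, so that sending $\beta/(1-p) = \beta'$ packets per helper suffices. The converse is inherited from the cut-set argument of Proposition 1 through the same substitution. The only genuinely delicate point is bookkeeping: one must check that the functions $f$ and $g$ as stated coincide with the threshold and slope expressions obtained by solving the two-sequential-minimization problem in \cite{Dimk01}, and confirm that the substitution $\gamma \mapsto (1-p)\gamma'$ respects the ordering of the $k$ intervals. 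Everything else is routine algebra.
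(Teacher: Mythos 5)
Your proposal matches the paper's own derivation: the paper obtains this corollary precisely by the substitution $\beta=(1-p)\beta'$, i.e.\ $\gamma=(1-p)\gamma'$, which maps the erasure-network cut-set condition onto the lossless identity $\sum_{i=0}^{k-1}\min\{\alpha,(d-i)\beta\}=M$ and then imports the piecewise tradeoff of \cite{Dimk01} with the stated $f(i)$ and $g(i)$. The achievability via random linear network coding over the lossy links is likewise the argument the paper gives in the paragraph preceding the corollary, so your route is essentially identical.
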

Hence, two extreme points on the bandwidth-storage tradeoff are as follows. For the MSR codes,
 \begin{eqnarray}
\alpha_{\mathrm{MSR}}=&\frac{M}{k}, \nonumber\\
\gamma_{\mathrm{MSR}}=&\frac{Md}{k(d-k+1)(1-p)}.
\label{Eq-MSR-p}
\end{eqnarray}
For the MBR codes,
 \begin{eqnarray}
\alpha_{\mathrm{MBR}}=&\frac{2Md}{k(2d-k+1)}, \nonumber\\
\gamma_{\mathrm{MBR}}=&\frac{2Md}{k(2d-k+1)(1-p)}.
\label{Eq-MBR-p}
 \end{eqnarray}

Fig. \ref{tardeoffs} shows the fundamental bandwidth-storage tradeoffs in packet erasure networks with $p=0.1,0.2,0.3$. Expectedly, the larger erasure probability leads to the higher repair traffic. These fundamental bandwidth-storage tradeoffs are derived with the assumption that the number of packet transmission in repair tends to infinity.
\begin{figure}
 \centering
 \psfrag{data1data1data1}[][][2.5]{ $p=0.1$ }
 \psfrag{data2data2data2}[][][2.5]{ $p=0.2$ }
 \psfrag{data3data3data3}[][][2.5]{ $p=0.3$ }
 \psfrag{ylabel}[][][2.5]{ $\alpha$ }
 \psfrag{xlabel}[][][2.5]{ $\gamma^{'}$ }
 \psfrag{title}[][][2.5]{ }
 \resizebox{8cm}{!}{\epsfbox{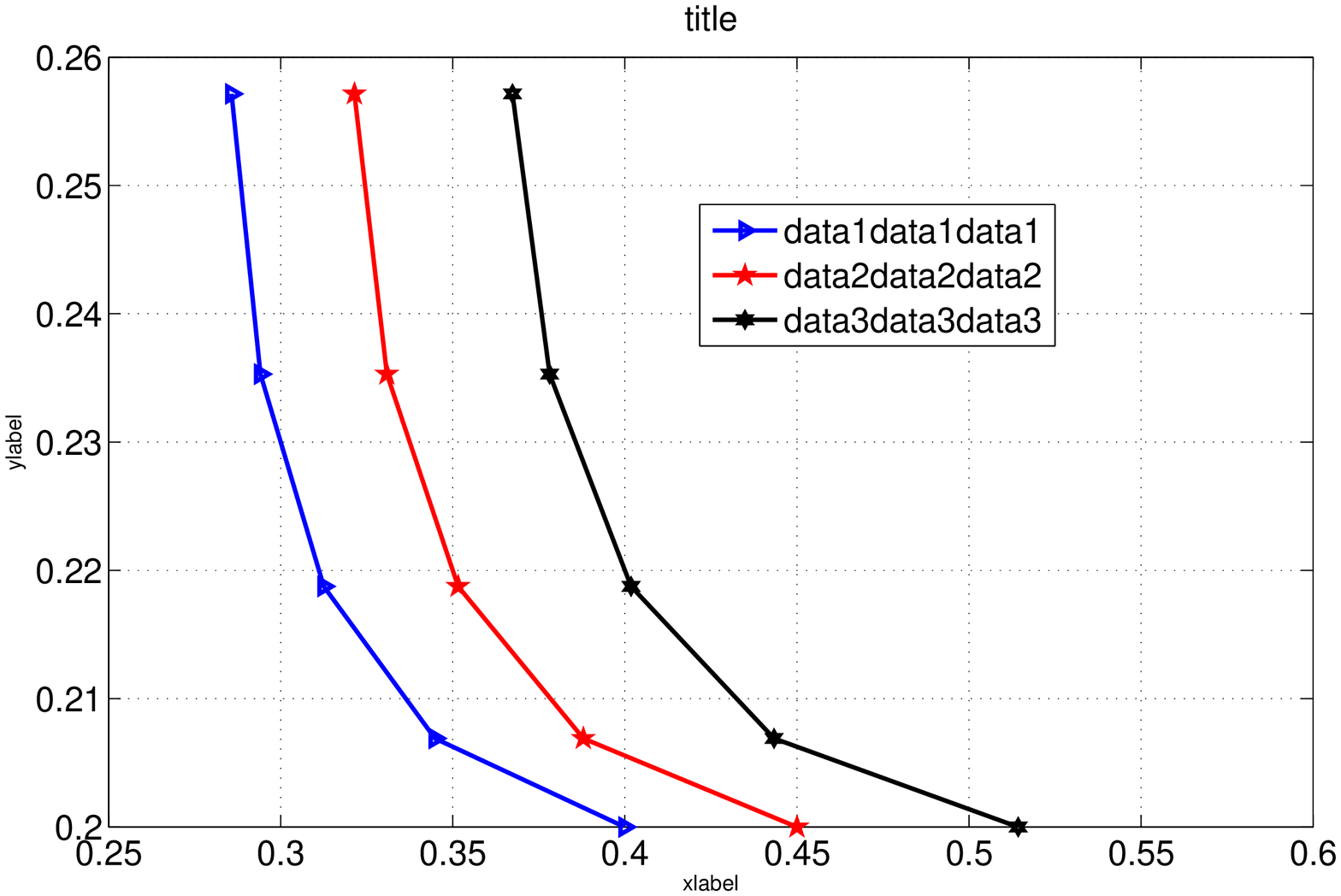}}
\caption{Fundamental bandwidth-storage tradeoffs in erasure networks, for $n=10, k=5, d=9$, and for different packet erasure probabilities. Higher packet erasure probability ($p$), higher repair traffic on average. The storage and repair-bandwidth in this figure are normalized for M=1.}
 \label{tardeoffs}
\end{figure}

\section{Reducing asymptotic repair-bandwidth}\label{Sec:RepHelper}
\subsection{Problem Formulation}
Following the results in the previous section,  we shall propose the approach of using capacity-limited storage nodes in  networks to reduce the repair-bandwidth. We show that this method can reduce the repair bandwidth in lossless networks and also in packet erasure networks. As shown before, the asymptotic repair-bandwidth for a distributed storage system with parameters $(n,k,d,\alpha,\gamma^{'}=d\beta^{'},M)$ in a packet erasure network with link erasure probability $p$ is computed by multiplying the factor $1/(1-p)$ to the repair-bandwidth  for a system with parameters $(n,k,d,\alpha,\gamma=d\beta,M)$ in a lossless network. Namely, $\gamma^{'}=\gamma/(1-p)$.

In what follows, we will first show that the repair-bandwidth is a decreasing function of $d$, the number of surviving nodes in the repair. The following two lemmas show this for two extreme points, namely MSR and MBR codes.

\begin{lem} Consider  a distributed storage system  using an MSR code with parameters $(n,k,d,\alpha=M/k,\gamma(d)=d\beta,M)$ in a lossless network. If the number of surviving nodes for repair increases from $d$ to $d+h$, for a positive integer $h$, then the  repair bandwidth decreases from $\gamma_{\mathrm{MSR}}(d)=Md/(k(d-k+1)$ to $\gamma_{\mathrm{MSR}}(d+h)=M(d+h)/(k(d+h-k+1)$.
\end{lem}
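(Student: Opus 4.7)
The plan is to prove the lemma by exhibiting the monotonicity of the closed-form MSR repair-bandwidth formula in $d$. Since the expression $\gamma_{\mathrm{MSR}}(d)=Md/(k(d-k+1))$ is already given by equation (\ref{Eq-MSR1}), the content of the lemma reduces to showing that this function is strictly decreasing in $d$ for $d\geq k$. So the entire argument is a short algebraic verification; there is no structural or coding-theoretic obstacle.

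First I would rewrite the formula in a form that makes the monotonicity transparent. Specifically, perform the division
\begin{equation}
\gamma_{\mathrm{MSR}}(d)=\frac{Md}{k(d-k+1)}=\frac{M}{k}\left(1+\frac{k-1}{d-k+1}\right).
\end{equation}
In this form it is immediate that, for $k\geq 1$ (the interesting case being $k\geq 2$, since $k=1$ makes the bandwidth constant in $d$), the second term is strictly decreasing in $d$ whenever $d\geq k$, so $\gamma_{\mathrm{MSR}}$ is a strictly decreasing function of $d$.

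Next I would make the step of size $h$ explicit. Substituting $d+h$ in place of $d$ and subtracting gives
\begin{equation}
\gamma_{\mathrm{MSR}}(d+h)-\gamma_{\mathrm{MSR}}(d)=\frac{M(k-1)}{k}\left(\frac{1}{d+h-k+1}-\frac{1}{d-k+1}\right),
\end{equation}
and the bracketed factor is negative for every positive integer $h$, so the difference is negative whenever $k\geq 2$. Plugging in $d+h$ into the formula then directly gives the stated expression $\gamma_{\mathrm{MSR}}(d+h)=M(d+h)/(k(d+h-k+1))$, completing the proof.

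The only subtlety worth flagging is the boundary case $k=1$, where $(k-1)/(d-k+1)=0$ and the bandwidth is in fact independent of $d$; one should either assume $k\geq 2$ throughout (which is the meaningful MSR regime) or state the conclusion as a non-strict decrease. Apart from this, no step is nontrivial, so I do not anticipate any real obstacle; the lemma is essentially a monotonicity observation about the MSR formula from \cite{Dimk01} inherited unchanged into the erasure-network setting via the multiplicative factor $1/(1-p)$.
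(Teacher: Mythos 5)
Your proposal is correct and is essentially the same argument as the paper's: a direct algebraic verification that $\gamma_{\mathrm{MSR}}(d)=Md/(k(d-k+1))$ is non-increasing in $d$ (the paper does it by cross-multiplying from $0\geq h(1-k)$, you do it via the decomposition $\frac{M}{k}\bigl(1+\frac{k-1}{d-k+1}\bigr)$, which is a cosmetic difference). Your remark about the $k=1$ boundary case is a fair point the paper glosses over --- its own chain of inequalities likewise only establishes a non-strict decrease --- but it does not change the substance.
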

\begin{proof} From (\ref{Eq-MSR1}), we have $\gamma_{\mathrm{MSR}}(d)=Md/(k(d-k+1)$, and $\gamma_{\mathrm{MSR}}(d+h)=M(d+h)/(k(d+h-k+1)$. We shall show that $\gamma_{\mathrm{MSR}}(d) \geq \gamma_{\mathrm{MSR}}(d+h)$.
Since $k \geq 1$, we have
\begin{eqnarray}
0  \geq & h(-k+1) \\
\Rightarrow d^2+hd-kd+d   \geq & d^2+hd-kd+d-hk+h \\
\Rightarrow d(d+h-k+1)  \geq & (d+h)(d-k+1)\\
\Rightarrow \frac{Md}{k(d-k+1)} \geq & \frac{M(d+h)}{k(d+h-k+1)}\\
\Rightarrow \gamma_{\mathrm{MSR}}(d) \geq & \gamma_{\mathrm{MSR}}(d+h)
\end{eqnarray}
\end{proof}

Similarly, we can prove for the MBR codes.

\begin{lem} Consider  a distributed storage system  using an MBR code with parameters $(n,k,d,\alpha_1=\gamma_1=2Md/k(2d-k+1),M)$ in a lossless network. If the number of surviving nodes for repair increases from $d$ to $d+h$, for  a positive integer $h$, then the  repair bandwidth decreases from $\gamma_{\mathrm{MBR}}(d)=2Md/k(2d-k+1)$ to $\gamma_{\mathrm{MBR}}(d+h)=2M(d+h)/k(2(d+h)-k+1)$.
\end{lem}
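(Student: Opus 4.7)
The plan is to mimic the algebraic manipulation used in the previous lemma for MSR codes, since both repair-bandwidth formulas are simple rational functions in $d$. First I would write down the two quantities to compare, namely
\[
\gamma_{\mathrm{MBR}}(d)=\frac{2Md}{k(2d-k+1)} \quad \text{and} \quad \gamma_{\mathrm{MBR}}(d+h)=\frac{2M(d+h)}{k(2(d+h)-k+1)},
\]
and note that the claim is equivalent to the inequality $\gamma_{\mathrm{MBR}}(d)\geq\gamma_{\mathrm{MBR}}(d+h)$.

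Next, since both denominators are positive (recall $d\geq k\geq 1$ and $h\geq 1$, so $2d-k+1>0$ and $2(d+h)-k+1>0$), I would cross-multiply to reduce the inequality to
\[
d\bigl(2(d+h)-k+1\bigr) \;\geq\; (d+h)(2d-k+1).
\]
Expanding both sides and cancelling the common terms $2d^2$, $-kd$, $d$, and $2dh$ yields the single-term condition $0 \geq h(1-k)$, i.e., $h(k-1)\geq 0$. This holds because $k\geq 1$ and $h$ is a positive integer, so the inequality is valid and the whole chain of implications can be reversed to conclude $\gamma_{\mathrm{MBR}}(d)\geq\gamma_{\mathrm{MBR}}(d+h)$.

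I do not anticipate a genuine obstacle here: the result is purely algebraic, completely parallel to the preceding lemma, and does not require any additional ideas from the information flow graph or coding constructions. The only minor care is to make the cross-multiplication step explicit by first noting the positivity of the denominators, and to present the chain of implications in the same style as the MSR proof so the reader can see the symmetry. The proof can therefore be written as a short sequence of implications culminating in the desired bandwidth inequality.
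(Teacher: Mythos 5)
Your proof is correct and follows essentially the same route as the paper's: the paper runs the same chain of algebra forward, starting from $0 \geq h(-k+1)$ and multiplying out to reach $\gamma_{\mathrm{MBR}}(d) \geq \gamma_{\mathrm{MBR}}(d+h)$, whereas you reduce the target inequality backward to $h(k-1)\geq 0$ after cross-multiplication. Your explicit note on the positivity of the denominators is a minor tidiness improvement over the paper's presentation, but the argument is otherwise identical.
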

\begin{proof} From (\ref{Eq-MBR1}), we have $\gamma_{\mathrm{MBR}}(d)=2Md/k(2d-k+1)$, and $\gamma_{\mathrm{MBR}}(d+h)=2M(d+h)/k(2(d+h)-k+1)$. We shall show that $\gamma_{\mathrm{MBR}}(d) \geq \gamma_{\mathrm{MBR}}(d+h)$.
Since $k \geq 1$, we have
\begin{eqnarray}
0  \geq & h(-k+1) \\
\Rightarrow 2d^2+2hd-kd+d   \geq & 2d^2+2hd-kd+d-hk+h \\
\Rightarrow d(2(d+h)-k+1)  \geq & (d+h)(2d-k+1)\\
\Rightarrow \frac{2Md}{k(2d-k+1)} \geq & \frac{2M(d+h)}{k(2(d+h)-k+1)}\\
\Rightarrow \gamma_{\mathrm{MBR}}(d) \geq & \gamma_{\mathrm{MBR}}(d+h)
\end{eqnarray}
\end{proof}

For other codes, a similar result can also be shown.
To reduce the repair-bandwidth by increasing $d$, we can use more storage nodes with capacity $\alpha$ for repair in a  distributed storage system with parameters $(n, k, d, \alpha, \gamma, M)$. For illustration, we call the storage with capacity $\alpha$ as a \emph{complete storage node}.  Using an additional complete storage node for repair requires a storage node with capacity $\alpha$. It can be quite expensive if $\alpha$ is large. If we could use additional storage nodes (for repair) with considerably smaller capacity, but using the same repair-bandwidth as using additional complete storage nodes, then the system could save considerably in storage costs.  We call these storage nodes, each of which  with capacity $\alpha^{'} < \alpha$ as \emph{repairing storage nodes}. Note that the repairing storage nodes are only used for repair. In other words, data collectors do not contact them for the original file. A natural question is that what is the minimum $\alpha^{'}$  such that  a repairing storage node is replaced by a complete storage node.

To address the question, we formulate the problem as follows. Let us consider a  distributed storage system (DSS) containing $n$ complete storage nodes, each of which  with capacity $\alpha$, and $h$ repairing storage nodes, each of which  with capacity $\alpha^{'} < \alpha$. A file of size $M$ is divided into $k$ parts and coded by an $(n,k)$-erasure code, and distributed among $n$ complete storage nodes such that any $k$ of $n$ complete storage nodes can rebuild the original file (reconstruction property). We assume repairing storage nodes are also directly connected to the source node and are only used for repair. Thus, we assume that repairing storage nodes always exist (They seldom fail and   a new repairing storage node is regenerated by directly downloading data from the source node if it fails).  When a complete storage node fails,  $d$ out of $n$ complete storage nodes help to regenerate the new node. In addition,   $h$ repairing storage nodes also help the repair.  In the repair each of these $d+h$ helper nodes sends $\beta$ packets to the new node. Thus, the repair-bandwidth, denoted as $\gamma$, is computed by $\gamma=(d+h)\beta$. The DSS is represented by DSS$(n,k,d,h,\alpha,\alpha^{'},\beta,M)$.

The optimization problem seeks to find the minimum amount for $\alpha^{'}$, for $\alpha^{'} < \alpha$,  such that the repairing storage nodes can still be useful  in the repair of the complete storage nodes. Formally, for given $d, h, $ and $\alpha$, we first find a lower bound for the repair-bandwidth ($\gamma$) by the following optimization problem

\begin{eqnarray}
\min_{\beta}  & \gamma=(d+h)\beta \nonumber \\
\text{subject to:}  & \sum_{i=0}^{k-1} \min \{ \alpha, (d+h-i)\beta\} \geq M.
\label{Formula}
\end{eqnarray}
The constraint in this optimization problem is for the reconstruction property. Next, we find the minimum amount of $\alpha^{'}$ for achieving the lower bound of the repair-bandwidth found in  (\ref{Formula}).

\begin{figure*}
 \centering
  \psfrag{newnode}[][][2.5]{ new node}
  \psfrag{DC}[][][2.5]{ $DC$ }
  \psfrag{s}[][][2.5]{ $S$ }
  \psfrag{a}[][][2.5]{ $\alpha$ }
  \psfrag{b}[][][2.5]{ $\beta, \varepsilon$ }
  \psfrag{cut}[][][2.5]{ $Cut$ }
  \psfrag{infty}[][][2.5]{ $\infty$ }
  \psfrag{vdots}[][][2.5]{ $\vdots$ }
  \psfrag{incdot}[][][2.5]{ $\ddots$ }
  \psfrag{in}[][][2.5]{$in$ }
  \psfrag{out}[][][2.5]{$out$ }
  \psfrag{stage0}[][][1.5]{stage 0 }
  \psfrag{stage1}[][][1.5]{stage 1 }
  \psfrag{hdots}[][][1.5]{$\ldots $}
  \psfrag{d1}[][][1.5]{$d\beta$}
  \psfrag{d2}[][][1.5]{$(d-1)\beta$}
  \psfrag{dk}[][][1.5]{$(d-k+1)\beta$}
  \psfrag{ap}[][][1.5]{$\alpha^{'}$}
  \psfrag{x}[][][2.5]{A repairing storage node}
  \psfrag{Cut}[][][1.5]{Cut}
 \resizebox{10cm}{!}{\epsfbox{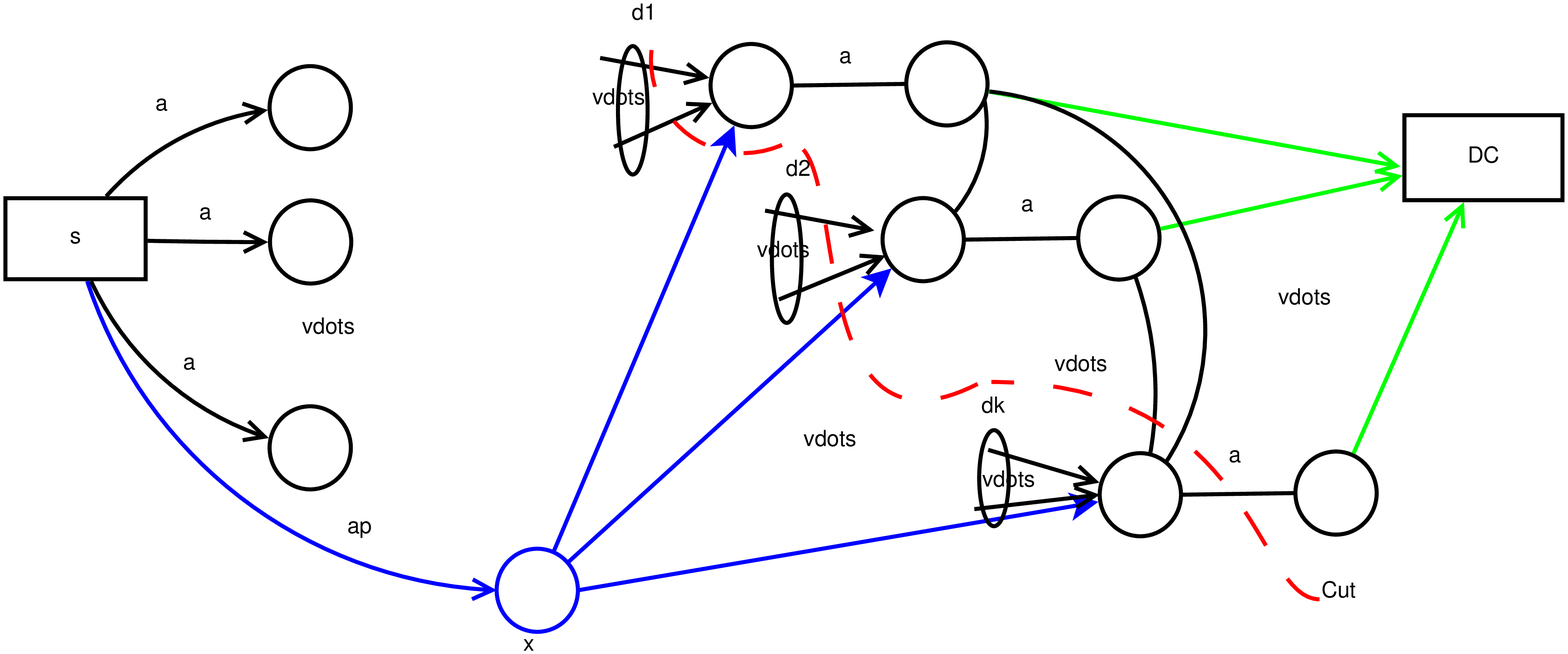}}
\caption{Information flow graph for distributed storage systems with a repairing storage node.}
 \label{Fig:HelperNode}
\end{figure*}

\subsection{The bandwidth gain of repairing storage nodes}\label{Sec:lowerBound}
Consider a distributed storage system with parameters $(n,k,d,h,\alpha,\alpha^{'},\beta,M)$.  Each helper node transmits $\beta$ packets for repairing of each complete storage node.  The lower bound of the repair-bandwidth is found by the cut-set bound analysis in the information flow graph.

 The information flow graph consists of four types of nodes: a source node, complete storage nodes with capacity $\alpha$, repairing storage nodes with capacity $\alpha^{'}$, and data collectors (DCs). Complete storage nodes are generally modeled by two nodes as $in$ and $out$ nodes, which are connected by a link of capacity $\alpha$. A source node, denoted as $S$, initially transmits information to the storage nodes through the links of capacity $\alpha$. The source node also transmits $\alpha^{'}$ packets of information to each repairing storage node. A DC recovers the original file by connecting to at least $k$ complete storage nodes. If a complete storage node fails, $d$ surviving nodes plus $h$ repairing storage nodes transmit in total $\gamma=(d+h)\beta$ to the new node. The information flow graph  of repair process is illustrated in Fig. \ref{Fig:HelperNode}.

The following proposition evaluates  an upper bound in the information rate in the presence of $h$ repairing storage nodes.

\begin{pro}   An upper bound of information rate in distributed storage system with parameters $(n,k,d,h,\alpha,\alpha^{'},\beta,M)$ is

\begin{equation} \sum_{i=0}^{k-1} \min \{ \alpha, (d-i+h)\beta \},\label{Formula:min-cut+h} \end{equation}
and thus DC can recover the original file if \begin{equation} \sum_{i=0}^{k-1} \min \{ \alpha, (d-i+h)\beta \} \geq M \label{Formula:min-cut+h2}\end{equation}
 \label{Pro:Lowerbound+h} \end{pro}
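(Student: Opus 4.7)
The plan is to adapt the cut-set argument used in Proposition~1 (and originally developed in \cite{Dimk01}) to the augmented information flow graph that now includes the $h$ repairing storage nodes as shown in Fig.~\ref{Fig:HelperNode}. By the max-flow min-cut theorem, it suffices to exhibit one $S$--DC cut of capacity $\sum_{i=0}^{k-1}\min\{\alpha,(d-i+h)\beta\}$; this automatically upper-bounds the achievable multicast rate, and then the achievability of equality (i.e.\ the ``if $\ldots \geq M$ then DC recovers the file'' half) follows from random linear network coding achieving multicast capacity, exactly as invoked at the end of the proof of Proposition~1.

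The construction of the cut proceeds stage by stage. First I would topologically sort the directed acyclic flow graph and restrict attention to a chain of $k$ successive failure/repair events of complete storage nodes, letting the DC connect through its infinite-capacity links to these $k$ newly regenerated complete nodes. Then, for the $i$-th regenerated complete node ($i=0,\dots,k-1$), I choose to cut either its internal $\alpha$-edge between $in$ and $out$, or the set of helper-to-new-node links feeding it. The bookkeeping step is the key: among the $d$ complete helpers of the $i$-th new node, the $i$ helpers that were themselves regenerated at earlier stages already lie on the sink side and therefore contribute nothing to the cut, leaving only $d-i$ complete-helper links of capacity $\beta$; simultaneously, because the $h$ repairing storage nodes are permanent and participate in every repair, keeping them all on the source side of the cut forces their $h$ outgoing repair links of capacity $\beta$ to appear as extra terms at \emph{every} stage. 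Hence stage $i$ contributes $\min\{\alpha,(d-i+h)\beta\}$, and summing over $i=0,\dots,k-1$ gives the claimed bound.

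The main subtlety I expect is the consistent treatment of the repairing storage nodes across all $k$ stages: one has to verify that a single global cut really can place every repairing node on the source side while also placing the desired $k$ regenerated nodes and the DC on the sink side, without creating any uncut $S$--DC path via, for instance, a source-to-repairing-node edge followed by a $\beta$-link into a new node. A secondary remark worth flagging in the write-up is that Proposition~\ref{Pro:Lowerbound+h} asserts only an \emph{upper} bound; tightness of this cut (i.e.\ that it is actually the min-cut) would additionally require $\alpha^{\prime}$ to be large enough that routing the cut through the $h$ source-to-repairing-node edges of capacity $\alpha^{\prime}$ does not yield a strictly smaller value---a question naturally deferred to the subsequent analysis of the minimum permissible $\alpha^{\prime}$ in Section~\ref{Sec:RepHelper}.
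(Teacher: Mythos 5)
Your proposal is correct and follows essentially the same route as the paper: a stage-by-stage cut on the topologically sorted information flow graph, choosing at stage $i$ between the internal $\alpha$-edge and the $(d-i+h)$ remaining helper links of capacity $\beta$ (previously regenerated helpers lying on the sink side, the $h$ repairing nodes contributing at every stage), with achievability deferred to random linear network coding as in Proposition~1. Your closing remark that tightness also requires $\alpha^{\prime}$ to be large enough that the source-to-repairing-node edges do not yield a smaller cut is a valid observation that the paper only addresses implicitly, in the discussion of the minimum $\alpha^{\prime}$ following the proposition.
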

\begin{proof}
By cut-set  analysis, we show that $\sum_{i=0}^{k-1} \min \{ \alpha, (d-i+h)\beta \}$ is the min-cut in the information flow graph. Consider a DC that is connected to $k$ new nodes.   Since the information flow graph is a directed acyclic graph, there is a topological sorting \cite{Raymond}. Thus the information flow graph after $k$  failure/repair processes can be shown in $k$ subsequent stages, as depicted in Fig. \ref{Fig:HelperNode}. The min-cut at any stage passes a route with a minimum capacity. For example at stage $1$ the min-cut selects between $\alpha$ and $(d+h)\beta$. At stage $2$, since the new node can get $\beta$ amount of information from regenerated node in previous stage, then the min-cut selects between $\alpha$ and $[(d+h)-1]\beta$, and so on. Finally a graph with a cut capacity equals to $\sum_{i=0}^{k-1} \min \{ \alpha, (d-i+h)\beta \}$ is yielded. Any other cut in the network has capacity greater than, or equal to the calculated min-cut. To transmit a file of size $M$, from the source to a DC, there must be $\sum_{i=0}^{k-1} \min \{ \alpha, (d-i+h)\beta \}\geq M$.
\end{proof}

 Solving  inequality (\ref{Formula:min-cut+h2}) for $M$ yields a lower bound of $\beta$ in term of $M$. We note that the lower bound has been derived with the assumption that the repairing storage nodes store a proper coded data such that can be used the same as complete storage nodes in repair. We shall discuss the achievable codes in next subsection.  The potential gain of using the repairing storage nodes can be derived at this stage.  We can find the minimum storage regenerating codes and minimum bandwidth regenerating codes when there exist $h$ repairing storage nodes. To find the minimum repair-bandwidth in the MSR point, we first minimize $\alpha$. Since $\alpha \geq \frac{M}{k}$, then $\alpha_{\mathrm{MSR}}=M/k$. For the minimum storage to satisfy $\sum_{i=0}^{k-1} \min \{ \alpha, (d-i+h)\beta \}=M$, we must have $(d+h-k+1)\beta= \alpha$, thus the minimum repair-bandwidth is,
  \begin{eqnarray}
\gamma_{\mathrm{MSR}}=\frac{M(d+h)}{k(d-k+h+1)}.
\label{Eq-MSR+h}
\end{eqnarray}
Another extreme point, the MBR point, is given for $\gamma=\alpha$ in inequality (\ref{Formula:min-cut+h2}).  Hence,
 \begin{eqnarray}
\alpha_{\mathrm{MBR}}=&\frac{2M(d+h)}{k(2(d+h)-k+1)},\\
\gamma_{\mathrm{MBR}}=&\frac{2M(d+h)}{k(2(d+h)-k+1)}.
\label{Eq-MBR+h}
 \end{eqnarray}
This results show a reduction in repair-bandwidth by a factor of $d(d+h-k+1)/((d+h)(d-k+1))$ for MSR codes, and by a factor of $d(2(d+h)-k+1)/((d+h)(2d-k+1))$ for MBR codes.  We note that we could achieve these gains by adding $h$ complete storage nodes. However repairing storage nodes have smaller storage space ($\alpha^{'}<\alpha$). The minimal $\alpha^{'}$ for $h=1$ will be discussed as follows. For $h>1$, we can follow a similar approach.

\subsection{Minimum repairing storage capacity for the MSR codes}\label{Sec:MSR}
We show that there exist linear codes for repair in a DSS satisfying bound in Proposition \ref{Pro:Lowerbound+h} when each complete storage node stores ($\alpha=M/k$), and  a repairing storage node stores $\alpha^{'}=\beta$. We note that $\alpha^{'}=\beta$ is the minimum storage capacity  for repairing storage node, since each repairing storage node transmits $\beta$  packets of information.

\begin{pro} [\emph{Achievability for MSR codes with $\alpha^{'} = \beta$}] For a repair process in a DSS with parameters $(n,k, d=n-1, h=1, \alpha=M/k, \alpha^{'}= \beta, \beta, M)$ of MSR codes,  there exist linear codes if each of $d+1$ helper nodes, including the repairing storage node, transmits  $\beta=M/(k(d-k+2))$ packets to the new node.
\label{pro:AchievabilityMSR}\end{pro}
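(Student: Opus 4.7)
The strategy is to adapt the random-linear-network-coding achievability argument already used in Proposition~2 to the augmented information flow graph of the DSS that includes a single repairing storage node with storage $\alpha' = \beta$. Note first that $\alpha' \ge \beta$ is forced: the repairing node must emit $\beta$ symbols in every repair, so it cannot store fewer than $\beta$ symbols. Thus $\alpha' = \beta$ is tight and only a single matching construction is needed.

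The first step is to draw the information flow graph for DSS$(n,k,d,h=1,\alpha=M/k,\alpha'=\beta,\beta,M)$ in analogy with Fig.~\ref{Fig:HelperNode}: the source $S$ has an outgoing edge of capacity $\alpha = M/k$ into the $in$-vertex of each complete node, and an outgoing edge of capacity $\alpha' = \beta$ into the $in$-vertex of the repairing node; the repairing node's internal $in$-to-$out$ link has capacity $\alpha' = \beta$. At every failure/repair stage, the new node's $in$-vertex is fed by edges of capacity $\beta$ from $d$ surviving complete helpers and from the (never-failing) repairing helper, while the new node's internal $in$-to-$out$ link has capacity $\alpha = M/k$. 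Every DC has infinite-capacity incoming edges from exactly $k$ complete nodes and is never connected to the repairing node.

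The second step is to compute the min-cut from $S$ to an arbitrary such DC. Following the topological-sort argument used in the proofs of Propositions~2 and~3, at stage $i$ (for $i=0,1,\dots,k-1$) the minimising cut selects the smaller of the storage edge $\alpha$ and the aggregate repair traffic $(d+1-i)\beta$ reaching the new node from outside the $i$ already-cut earlier new nodes, the $+1$ accounting for the ever-present repairing helper. Plugging in $\alpha=M/k$ and $\beta=M/(k(d-k+2))$ gives $(d+1-i)\beta \ge (d-k+2)\beta = \alpha$ for every $i\le k-1$, so the min-cut contribution at each stage equals $\alpha$ and the total min-cut is $k\alpha = M$. Proposition~3 already furnishes the matching upper bound, so the min-cut is exactly $M$.

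Finally, invoking the random linear network coding theorem (the same one used in the achievability of Proposition~2 via \cite{Lun02}), coding every edge of this graph with i.i.d.\ uniformly random coefficients over a sufficiently large $\mathrm{GF}(q)$ achieves the multicast min-cut $M$ with high probability, simultaneously for every DC arising under the (potentially infinite) failure/repair dynamic. The resulting code assigns to the repairing node a single fixed $\beta$-symbol linear projection of the source, which it forwards verbatim in every subsequent repair. The main obstacle here is conceptual rather than algebraic: in a standard $(n+1,k,d+1)$-MSR construction a helper would project its $\alpha$-symbol storage differently for each distinct failure, whereas here the repairing node is forced to reuse one $\beta$-symbol projection across all future failure patterns; the random-network-coding viewpoint resolves this transparently because the $\alpha'=\beta$ capacity is hard-wired into the graph and the achievability theorem is applied graph-wide, so the reusability across failures is automatic.
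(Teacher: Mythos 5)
Your route is genuinely different from the paper's. The paper does not re-run the information-flow-graph/random-network-coding argument at all for Proposition \ref{pro:AchievabilityMSR}: it normalizes $M=k(n-k+1)$ so that $\beta=1$, writes down an explicit $(n,k)$-MDS code from an $r\times c$ Vandermonde matrix with $r=n(n-k+1)+1$ and $c=k(n-k+1)$, assigns $n-k+1$ consecutive rows to each complete node and the single last row to the repairing storage node, and then shows that the determinant certifying the reconstruction property after one repair, $\det([\mathbf{Q}^{'}_{s_1},\mathbf{Q}_{s_2},\cdots,\mathbf{Q}_{s_k}])$, is a not-identically-zero polynomial in the code coefficients; the sparse-zero lemma (Lemma \ref{SPZ}) then yields existence over a large enough $\mathrm{GF(q)}$. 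That argument buys an explicit storage assignment for the repairing node (one Vandermonde row) and a per-repair existence proof that can be iterated under the failure/repair dynamic without a union bound over an unbounded graph. Your argument buys uniformity with the proofs of Propositions 1 and \ref{Pro:Lowerbound+h}, and it makes the role of $\alpha^{'}$ structurally transparent, but it leans on applying the multicast achievability of \cite{Lun02} to a graph with a potentially infinite set of data collectors, which requires at least a remark (e.g., an inductive, per-repair application) that you do not supply.

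There is also one concrete gap in your min-cut step. You only enumerate the ``standard'' cuts that choose, at stage $i$, between the new node's storage edge $\alpha$ and the aggregate inflow $(d+1-i)\beta$; this family is exactly the content of Proposition \ref{Pro:Lowerbound+h}, which is an \emph{upper} bound (existence of a cut of that size), not a lower bound on all cuts. Because the repairing node's internal edge now has capacity only $\alpha^{'}=\beta$, there is a new competing cut: place $out$ of the repairing node on the sink side, pay $\alpha^{'}$ once, and then every stage sees only $(d-i)\beta$ of inflow from the complete helpers. That cut has capacity $\alpha^{'}+\sum_{i=0}^{k-1}\min\{\alpha,(d-i)\beta\}=\beta+(k-1)\alpha+(d-k+1)\beta=k\alpha=M$ at the MSR point, so your conclusion survives — but only just, and this is precisely the cut that is \emph{not} saturated at the MBR point unless $\alpha^{'}\geq k\beta$ (it is the flow-graph counterpart of the entropy argument in the non-achievability lemma for MBR codes). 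Without checking it, your claim that the min-cut equals $M$ is unjustified, and checking it is where the tightness of $\alpha^{'}=\beta$ actually lives.
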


\begin{proof} See Appendix A.
\end{proof}
Proposition \ref{pro:AchievabilityMSR} shows that  reducing the repair-bandwidth for the MSR codes can be attained by adding a repairing storage node with storage capacity of $\beta$ instead of using a complete storage node with capacity $\alpha=M/k=\beta(d-k+2)$. Hence, it requires less storage by the ratio of $(d-k+2)$.

In the next subsection, we investigate the minimum storage required for a repairing storage node at the MBR point.

\subsection{Minimum repairing storage capacity for the MBR codes}\label{Sec:MBR}
We show that another extreme point on the bandwidth-storage tradeoff can be achieved in the presence of one repairing storage node. As we shall show, for the achievability of the MBR point, the repairing storage node requires more storage capacity than in the case of MSR codes (which was $\alpha^{'} =\beta$). Next proposition proves non-achievability of the MBR point for $\alpha^{'}<k\beta$.

\begin{lem} [\emph{Non-achievability for MBR codes with $\alpha^{'} < k\beta$}] For the MBR codes  in a DSS with parameters $(n,k,d,h=1,\alpha=(d+1)\beta,\alpha^{'},\beta,M)$, we have $\alpha^{'} \geq k\beta$.
\end{lem}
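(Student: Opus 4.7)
My plan is to argue by a carefully chosen cut in the information flow graph, exactly mirroring the min-cut analyses of Propositions~1 and~2 but with the repairing storage node placed on the data-collector side. The point is that a repairing storage node which participates in every regeneration contributes to every stage of the min-cut, so its total ``savings'' in repair-bandwidth is bounded above by its own storage $\alpha'$ plus the connection from the source.

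First I would fix the setting: at the MBR point with $h=1$ the bound of Proposition~2 forces $\gamma=\alpha=(d+1)\beta$, so $M=\sum_{i=0}^{k-1}(d+1-i)\beta=\tfrac{k(2d-k+3)\beta}{2}$. Now I construct a specific $S$--DC cut in the information flow graph after $k$ consecutive failure/repair stages in which the repairing storage node helps every repair, and the DC contacts the $k$ new nodes. The cut I propose places the repairing storage node on the DC side and each new node's internal $\alpha$-edge on the source side at each stage, topologically ordered as in the proof of Proposition~2. The only source-to-sink edges crossing the cut are then (i) the single edge $S\to$ repairing storage node of capacity $\alpha'$, and (ii) at stage $i$ the incoming $\beta$-edges to the $i$-th new node from those helpers that remain on the source side, namely $d-i+1$ complete storage nodes (the previous $i-1$ new nodes and the repairing storage node are already on the DC side).

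Summing these contributions gives a cut of capacity
\begin{equation}
\alpha'+\sum_{i=1}^{k}\min\{\alpha,(d-i+1)\beta\}=\alpha'+\sum_{i=1}^{k}(d-i+1)\beta=\alpha'+\frac{k(2d-k+1)\beta}{2},
\end{equation}
where the first equality uses $\alpha=(d+1)\beta\geq(d-i+1)\beta$ for $i\geq1$. The reconstruction property forces this cut to be at least $M=\tfrac{k(2d-k+3)\beta}{2}$, and rearranging immediately yields $\alpha'\geq k\beta$, which is the claim.

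The only place I expect mild friction is in justifying that this particular cut is admissible, i.e.\ that after topological sorting the repairing storage node and all the previously regenerated new nodes can consistently sit on the DC side while the $k$ new nodes the DC queries are regenerated in order. This is handled exactly as in the proof of Proposition~2 (acyclicity of the information flow graph plus the standard topological ordering), with the single extra observation that the repairing storage node's only incoming edge, from $S$ of capacity $\alpha'$, crosses the cut exactly once even though the node helps in all $k$ stages --- this is precisely why the saving on the right-hand side is $k\beta$ while the cost paid on the left-hand side is only $\alpha'$.
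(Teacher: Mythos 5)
Your proof is correct, and it takes a genuinely different (and in one respect cleaner) route than the paper. The paper argues by contradiction with an entropy count: assuming $\alpha'=(k-1)\beta$, it expands $H(W_{l+1},\dots,W_{l+k})$ by the chain rule over $k$ successively regenerated nodes, observes that the repairing storage node's total contribution across the $k$ stages is capped by its entropy $(k-1)\beta$ so the last conditional term drops from $(d-k+2)\beta$ to $(d-k+1)\beta$, and arrives at $M=M-\beta$. Your version packages the same counting as an explicit $S$--DC cut with the repairing storage node on the sink side, so that its $k$ per-stage contributions of $\beta$ are all charged against the single crossing edge of capacity $\alpha'$; requiring the cut to be at least $M=\tfrac{k(2d-k+3)\beta}{2}$ gives $\alpha'\geq k\beta$ directly. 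This buys two things: you get the full inequality for arbitrary $\alpha'$ in one step rather than refuting the particular value $(k-1)\beta$ (the paper's argument as written needs the reader to see that it extends to all $\alpha'<k\beta$), and the admissibility of the cut is inherited verbatim from the min-cut analysis of Proposition~2, as you note. One wording slip: you say you place ``each new node's internal $\alpha$-edge on the source side,'' but each new node must sit wholly on the DC side (otherwise the infinite-capacity edge to the DC would cross the cut); your own accounting via $\min\{\alpha,(d-i+1)\beta\}$ with $\alpha=(d+1)\beta$ makes clear you intend the incoming $\beta$-edges, not the internal edge, to be the crossing edges, so this is cosmetic rather than substantive.
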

\begin{proof} We prove by contradiction. Assume $\alpha^{'} =(k-1)\beta < k\beta$ (without essential loss of generality we assume $\alpha^{'}$ and $\beta$ are non-negative integers). Let $S$ denote the random variable representing the source file. For the source file of size $M$ bits, we have $H(S)=M$, where $H(X)$ refers to the entropy of random variable $X$. Next, let $W_{l}$ denote the random variable representing the content of node $l$ for $l \in [n]$. Assume $W_{n+i}$ denotes the corresponding  random variable for the content of the new node after $i$ stages of repair. Since every $k$ nodes have to reconstruct the original file, we have,
\begin{eqnarray}
M=&H(W_{l+1},W_{l+2},\cdots,W_{l+k}) \label{Formul:r1}\\
=&H(W_{l+1})+H(W_{l+2}\mid W_{l+1})+\cdots\nonumber \\&+H(W_{l+k}\mid W_{l+1},\cdots,W_{l+k-1} ) \label{Formul:r2}\\
=&(d+1)\beta+(d)\beta+\cdots\nonumber \\&+(d-k+1)\beta+(d-k+1)\beta \label{Formul:r3}\\
=&M-\beta,\label{Formul:r4}
\end{eqnarray}
which is a contradiction.

In the proof, (\ref{Formul:r1}) follows by the chain rule of entropy, (\ref{Formul:r2}) follows from the fact that each new node in stage $i$, conditioning on knowing information in previous stages, receives $(d+1-i)\beta$ new information. Also, (\ref{Formul:r3}) follows from the fact that $\alpha^{'} =(k-1)\beta$, and (\ref{Formul:r4}) follows the fact that the point is located on the bandwidth-storage tradeoff, thus $\sum_{i=0}^{k-1} (d-i+1)\beta \} = M$.\end{proof}

 The next proposition shows that  MBR codes can be achieved if the repairing storage node stores $\alpha^{'}= k\beta$.
 \begin{pro} [\emph{Achievability for MBR codes with $\alpha^{'} = k\beta$}] For the repair process of MBR codes with a repairing storage node with storage capacity  $\alpha^{'}= k\beta$, there exist linear codes if each of $d+1$ nodes, including the repairing storage node, transmits  $\beta=2M/(k(2d-k+3))$ packets to the new node.
\label{pro:AchievabilityMBR}\end{pro}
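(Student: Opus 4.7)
\emph{Proof plan.} The plan is to establish achievability by exhibiting a linear code that meets the cut-set bound of Proposition \ref{Pro:Lowerbound+h} with equality. As a preliminary check, substitute the stated parameters $\alpha=(d+1)\beta$, $\alpha'=k\beta$ and $\beta=2M/(k(2d-k+3))$ into $\sum_{i=0}^{k-1}\min\{\alpha,(d+1-i)\beta\}$; since the second argument is always the minimum, the expression collapses to the arithmetic series $\beta\,k(2d-k+3)/2 = M$, confirming that the new code would operate exactly on the MBR corner of the tradeoff.

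For the code itself I would start from a standard MBR construction on parameters $(n+1,k,d+1)$, e.g.\ the Product-Matrix MBR code of \cite{Rashmi01}, and treat the repairing storage node as the $(n+1)$-th position. In the canonical MBR code every node stores $(d+1)\beta$ symbols and every helper transmits $\beta$ packets per repair, so directly inheriting this construction would give (i) the reconstruction property at any DC that contacts $k$ complete nodes, (ii) a well-defined linear rule for the $\beta$ packets each helper (including the repairing node) sends in any repair, and (iii) an automatic $\alpha=\gamma=(d+1)\beta$ fit between download and storage at the new node.

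The real work is to argue that the repairing node only needs to store $\alpha'=k\beta$ symbols rather than the $(d+1)\beta$ symbols prescribed by the canonical code. The key observation is that every data-collector cut in the information flow graph traverses exactly $k$ failure/repair stages, and at each such stage the repairing node contributes exactly $\beta$ packets; therefore only a $k\beta$-dimensional subspace of its would-be $(d+1)\beta$-dimensional encoding is ever exercised by a cut. I would replace the $(d+1)\beta\times M$ encoding matrix of the $(n+1)$-th node in the canonical code by a $k\beta\times M$ submatrix whose row space contains every linear combination the node is asked to transmit in any admissible repair sequence, and read the required $\beta$ packets from this compressed storage.

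The main obstacle is that the identity of future failures is not fixed in advance, so the chosen $k\beta$-dimensional image must simultaneously support every possible sequence of repairs and every possible DC query. The cleanest resolution is a random-coding argument, parallel to the achievability in Proposition \ref{Pro:Lowerbound+h}: let the repairing node store $k\beta$ uniformly random linear combinations of the $M$ message symbols over $\mathrm{GF}(q)$, and let every helper perform random linear network coding on its outgoing repair packets. Since the first step above shows that the min-cut from the source to every data collector equals $M$, the multicast network coding theorem guarantees that for $q$ sufficiently large a deterministic linear code satisfying every reconstruction and repair constraint exists, which is exactly the assertion of the proposition.
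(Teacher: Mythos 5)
Your proposal is correct in its overall conclusion but takes a genuinely different route at the decisive step. Like the paper, you begin from a full MBR code on the augmented parameter set $(n+1,k,d+1,\alpha=(d+1)\beta,\beta,M)$ and treat the repairing node as the extra position; the arithmetic check that $\sum_{i=0}^{k-1}(d+1-i)\beta=M$ at $\beta=2M/(k(2d-k+3))$ matches the paper. Where you diverge is in justifying that $k\beta$ symbols of storage suffice at the repairing node. The paper's argument is deterministic and pins down \emph{which} $k\beta$ packets to store: the $\beta$ repair packets the repairing node would send at each of the first $k$ failure stages. It then observes that, because any $k$ nodes reconstruct the file, the code on any later failed node is a linear combination of the codes on $k$ reference nodes, and since the repair traffic a helper sends is linear in the stored codes, the traffic required at any stage $i>k$ lies in the span of the traffic already designated for stages $1,\dots,k$. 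This gives an explicit construction with no appeal to random coding or field-size growth for the repairing node's part. Your route instead stores $k\beta$ uniformly random combinations and invokes the multicast min-cut theorem. That is acceptable at the paper's level of rigor, but it leaves two things asserted rather than shown: (i) your ``first step'' verifies the cut value only for cuts that do not sever the repairing node's internal edge; you must also check cuts that do pass through that edge of capacity $\alpha'=k\beta$, where the $k\beta$ capacity exactly matches the $k\beta$ total the node would otherwise contribute over $k$ stages (this is precisely the boundary case identified by the non-achievability lemma for $\alpha'<k\beta$); and (ii) the information flow graph has an unbounded number of repair stages and data collectors, so the generic multicast argument needs the usual care about field size, which the paper's structural span argument sidesteps. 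Your middle paragraph's per-cut dimension count is, as you yourself note, not sufficient on its own because the failure sequence is adaptive; the paper's observation that \emph{all} future repair traffic of the repairing node lies in one fixed $k\beta$-dimensional space is exactly the missing ingredient that makes a deterministic choice possible.
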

\begin{proof} See Appendix B.\end{proof}
Proposition \ref{pro:AchievabilityMBR} shows that reducing the repair-bandwidth for the MBR codes can be attained by adding a repairing storage node with storage capacity of $k\beta$ instead of $(d+1)\beta$. Note that in some scenarios (if not most), $d = n -1$. Thus, using repairing storage nodes always reduces storage space substantially.

According to results in Section \ref{sec:Tradoff}, we can derive the asymptotic optimal repair-bandwidth  in packet erasure networks as follows.

\subsection{Repair with repairing storage nodes in packet erasure networks}
Above results show that repair bandwidth can be reduced if a repairing storage node is used for a distributed storage system with error-free channels. In what follows, we shall show that a repairing storage node can also reduce the repair-bandwidth in packet erasure networks. The analysis and code construction are similar with the analysis in Section \ref{sec:Tradoff}. Again, in a packet erasure network with links having equivalent erasure probabilities, $p$, the asymptotic optimal repair-bandwidth equals to that of the  repair-bandwidth in lossless network multiplying $1/(1-p)$.  The following two corollaries illustrate more formally the impact of a repairing storage node in packet erasure networks. We note that the results are for $h=1$, and for $h>1$, we can follow the similar approach.
\begin{col} For a DSS with  parameters $(n,k,d,h=1,\alpha=M/k,\alpha^{'} =\beta,\beta^{'}=\beta/(1-p),M)$ in a packet erasure network with channels having packet erasure probability $p$, the asymptotic optimal repair-bandwidth repair is  $\gamma_{\mathrm{MSR}}^{'}=M(d+1)/(k(d-k+2)(1-p))$.
\end{col}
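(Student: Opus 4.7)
The plan is to combine the lossless MSR-with-repairing-node construction from Proposition~\ref{pro:AchievabilityMSR} with the erasure-channel argument developed in Section~\ref{sec:Tradoff} (Proposition~1 and Corollary~1). Conceptually, the only change introduced by moving from the lossless network to a packet erasure network with probability $p$ is that each link from a helper (either a complete surviving node or the repairing storage node) to the new node is replaced by an erasure channel whose capacity, in the asymptotic regime, is a factor $(1-p)$ of its packet transmission rate. So I would expect every occurrence of $\beta$ in the lossless min-cut bound to be replaced by $(1-p)\beta'$ and the rest of the derivation to carry through verbatim.

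Concretely I would proceed in three steps. First, I would redraw the information flow graph of Fig.~\ref{Fig:HelperNode} with erasure links replacing the links of capacity $\beta$ from the $d$ surviving complete nodes and from the single repairing storage node to the new node; each such link now has capacity $(1-p)\beta'$. Repeating the cut-set argument of Proposition~\ref{Pro:Lowerbound+h} verbatim on this graph gives the bound
\begin{equation}
\sum_{i=0}^{k-1}\min\bigl\{\alpha,\,(d-i+1)(1-p)\beta'\bigr\}\geq M.
\end{equation}
Second, I would specialize this bound to the MSR point exactly as in the lossless case: since $\alpha\geq M/k$ forces $\alpha_{\mathrm{MSR}}=M/k$, the binding constraint at the smallest stage $i=k-1$ becomes $(d-k+2)(1-p)\beta'=M/k$, which yields $\beta'=M/(k(d-k+2)(1-p))$ and hence $\gamma'_{\mathrm{MSR}}=(d+1)\beta'=M(d+1)/(k(d-k+2)(1-p))$, which is the claimed expression.

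Third, for achievability I would reuse the two-layer construction already used for Corollary~\ref{Proposition:Regbound}: build an MSR regenerating code with a repairing storage node in a lossless network with per-link load $\beta=M/(k(d-k+2))$ (which exists by Proposition~\ref{pro:AchievabilityMSR}, with $\alpha'=\beta$ on the repairing node), and then apply random linear network coding on each erasure link from the $d+1$ helpers to the new node. Each helper transmits $\beta/(1-p)$ coded packets; as the number of transmissions grows and the field size is large enough, the new node collects $\beta$ packets with independent global vectors on each incoming link with probability tending to one, so the lossless decoding procedure of Proposition~\ref{pro:AchievabilityMSR} at the new node succeeds and the reconstruction property at any data collector is maintained.

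The only subtlety I anticipate is the storage constraint on the repairing storage node: in the lossless construction it stores exactly $\alpha'=\beta$ packets, and we must check that the erasure-layer coding does not force it to store more. It does not, because the random linear combinations the repairing node transmits are formed on the fly from its $\beta$ stored packets; the extra $\beta/(1-p)-\beta$ redundancy lives in the transmission, not in the storage. This observation, together with the cut-set derivation above and the asymptotic capacity of random linear codes on packet erasure channels cited from \cite{Lun02}, is what makes the whole argument go through, and it is the step I would write out most carefully.
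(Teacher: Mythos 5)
Your proposal is correct and follows essentially the same route the paper intends: it states this corollary without a separate proof, relying exactly on the combination you describe, namely the lossless repairing-storage-node bound of Proposition~\ref{Pro:Lowerbound+h} specialized to $h=1$ at the MSR point, together with the Section~\ref{sec:Tradoff} argument that erasure links of capacity $(1-p)\beta'$ simply introduce the factor $1/(1-p)$, with achievability via the lossless code of Proposition~\ref{pro:AchievabilityMSR} plus random linear coding on each link. Your added observation that the $\beta/(1-p)$ transmission redundancy does not inflate the repairing node's storage beyond $\alpha'=\beta$ is a worthwhile detail the paper leaves implicit.
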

Similarly, for MBR codes there is a corollary as follows.
\begin{col} For a DSS with  parameters $(n,k,d,h=1,\alpha,\alpha^{'} =k\beta,\beta^{'}=\beta/(1-p),M)$ in an erasure network with channels having packet erasure probability $p$, the asymptotic optimal repair-bandwidth  is  $\gamma_{\mathrm{MBR}}^{'}=2M(d+1)/(k(2d-k+3)(1-p))$.
\end{col}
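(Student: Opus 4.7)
\medskip

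The plan is to reduce the claim to two results already proven in the paper: Proposition~4 (achievability of MBR with a single repairing storage node in a lossless network) and the erasure-channel argument used in Section~\ref{sec:Tradoff} (which converts a lossless regenerating code into one for a packet erasure network by inflating the per-link transmission count by a factor $1/(1-p)$). So I would prove the corollary as an MBR analogue of the preceding MSR corollary, following the same two-layer construction.

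\emph{Achievability.} First I would invoke Proposition~4 to obtain, in an idealized lossless network, a linear MBR code for the DSS with parameters $(n,k,d,h=1,\alpha,\alpha^{'}=k\beta,\beta,M)$ where $\beta = 2M/(k(2d-k+3))$. In this code each of the $d+1$ helper nodes (the $d$ surviving complete storage nodes plus the single repairing storage node) delivers $\beta$ independent coded symbols to the new node, giving lossless repair-bandwidth $\gamma_{\mathrm{MBR}} = (d+1)\beta$. I would then layer random linear network coding on each erasure link, exactly as in the construction immediately following Proposition~1: each helper node forms $\beta^{'} = \beta/(1-p)$ random linear combinations of its $\beta$ repair packets and transmits them over its link to the new node. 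As the number of transmissions grows, with probability approaching one the new node receives $\beta$ packets with linearly independent global encoding vectors from each helper, which it can invert to recover the intended $\beta$ symbols per helper. Substituting $\beta$ then yields
\begin{equation*}
\gamma_{\mathrm{MBR}}^{'} = (d+1)\beta^{'} = \frac{(d+1)\beta}{1-p} = \frac{2M(d+1)}{k(2d-k+3)(1-p)}.
\end{equation*}

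\emph{Converse.} I would adapt the min-cut argument of Proposition~1 to the information flow graph of Fig.~\ref{Fig:HelperNode} with $h=1$, where each link from a helper to the new node now has capacity $(1-p)\beta^{'}$ because of the erasure channel. Repeating the stage-by-stage cut construction used in Proposition~2 gives the bound
\begin{equation*}
\sum_{i=0}^{k-1} \min\{\alpha, (d-i+1)(1-p)\beta^{'}\} \geq M .
\end{equation*}
For the MBR point I would then enforce $\alpha = (d+1)(1-p)\beta^{'}$ (the condition that makes storage equal to the aggregate per-helper delivery), solve the resulting equation $\sum_{i=0}^{k-1}(d-i+1)(1-p)\beta^{'} = M$, and obtain exactly $(d+1)(1-p)\beta^{'} = 2M(d+1)/(k(2d-k+3))$, matching the achievability expression.

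The only nontrivial step is making sure the cut-set argument remains tight when the repairing storage node is included: one must check that the single repairing node, which carries capacity $\alpha^{'}=k\beta$ and is reached from the source by an infinite-capacity link, does not offer a cheaper cut than the stage-by-stage cut above. This is exactly the point handled in the proof of Proposition~2 and, for the MBR/repairing-node interaction, in the non-achievability lemma preceding Proposition~4; once those lemmas are invoked, everything else reduces to algebraic substitution and the asymptotic random-coding argument already used for the baseline MBR corollary in~\eqref{Eq-MBR-p}.
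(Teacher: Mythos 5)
Your proposal is correct and follows essentially the same route the paper intends: the paper states this corollary without a separate proof, relying exactly on the combination you spell out, namely the lossless MBR construction with one repairing storage node (Proposition~\ref{pro:AchievabilityMBR}, giving $\beta=2M/(k(2d-k+3))$ and $\gamma_{\mathrm{MBR}}=(d+1)\beta$) layered with the random-linear-coding-over-erasure-links argument of Section~\ref{sec:Tradoff} that inflates each link's transmission count by $1/(1-p)$, together with the cut-set converse of Proposition~\ref{Pro:Lowerbound+h} with each helper link's capacity replaced by $(1-p)\beta^{'}$. The only cosmetic slip is describing the source-to-repairing-node link as infinite-capacity (the flow graph assigns it capacity $\alpha^{'}$), which does not affect the argument since the stage-by-stage cut never crosses that edge and the non-achievability lemma already guarantees $\alpha^{'}=k\beta$ suffices.
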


\section{Reducing the repair-bandwidth for the case of finite packet transmissions} \label{sec:ReduceFinite}

Above we have investigated the optimal repair-bandwidth in packet erasure networks under the ideal  assumption of infinite number of repair packet transmissions.  However, due to the constraints of e.g., delay or complexity \cite{Ming11}, \cite{Koller11}  the number of packet transmissions may be finite. It is thus interesting to study the repair-bandwidth in the case of a finite number of packet transmissions. Since the repair may fail (the node does not receive sufficient repair packets), we first analyze the probability of successful repair (PSR). Then, we will propose a method to reduce the repair-bandwidth under the constraint of the PSR. The asymptotic optimal repair-bandwidth derived in previous sections will be used as a lower-bound of the repair-bandwidth of a finite number of packet transmissions.

We assume that the packet erasures are i.i.d. Bernoulli random process with probability $p$ and a regenerating code designated by  parameters $(n, k, d, \alpha, \beta, M)$.  A surviving node then transmits $t$ packets formed by linear combination (in  $\mathrm{GF(q)}$) of $\beta$ repair packets.  Previously we showed that $t=\beta(1-p)$ for the infinite number of packet transmissions. For a finite number of transmissions, let $P_{\beta}$ denote the probability of successfully receiving $\beta$ repairing packets from each surviving node. If $t$ packets are transmitted, the probability of successful recovery of $\beta$ packets (using the result in \cite{Acedanski01},\cite{Martal01}) equals
\begin{eqnarray}
P_{\beta}=\begin{cases} \sum_{i=\beta}^{t} \binom{t}{i} (1-p)^{i}p^{(t-i)}\frac{\prod_{l=0}^{\beta-1}(q^{i}-q^l)}{q^{\beta i}} & \text{ if } t\geq \beta,\\ 0 & \text{otherwise}. \end{cases}
\label{EQ:P_beta}
\end{eqnarray}
Hence, the PSR is equivalent to the probability of successfully receiving repair traffic from $d$ links. That is
\begin{eqnarray}
P_{s}=P_{\beta}^d.
\label{EQ:P_beta}
\end{eqnarray}

Fig. \ref{Fig:3D-Ps} shows the PSR for a distributed storage system using MBR codes with parameters $(n=10,k=5,d=9,\alpha=18, \beta=2, M=70)$ and $p=0.3$.  We observe that the required repair-bandwidth for $P_s$ approaching 1  is larger than the optimum repair-bandwidth. That is, the bandwidth overhead ratio $t/\beta$ might be several times greater than the optimal value $1/(1-p)$ due to finite number of packet transmissions. In what follows, we shall find the minimum bandwidth under the constraint of the PSR. Let us define $\delta$, for $0 \leq \delta \leq 1$, as a parameter indicating how close  the PSR is to $1$. In addition, let us define the minimum required bandwidth to achieve the PSR greater than $1-\delta$ as \emph{practical repair-bandwidth} for  given $\delta$ and $d$ helper nodes, which is denoted as $\widehat{\gamma}(\delta,d)$. Then
\begin{eqnarray}
\widehat{\gamma}(\delta,d)= \min_t & dt \nonumber\\
\text{subject to:}   & P_s \geq 1-\delta.
\label{EQ:beta}
\end{eqnarray}

\begin{figure}
 \centering
 \psfrag{zlabel}[][][2.5]{ $P_s$ }
 \psfrag{xlabel}[][][2.5]{ $\frac{t}{\beta}$ }
 \psfrag{ylabel}[][][2.5]{ $\beta$ }
 \psfrag{o}[][][2.5]{ oooooooppppp }
 \resizebox{8cm}{!}{\epsfbox{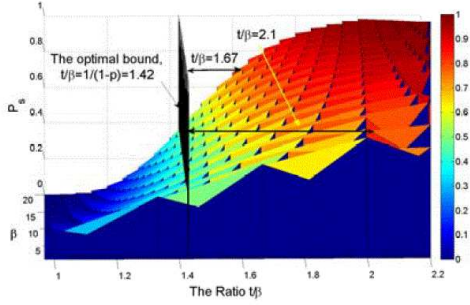}}
\caption{Probability of successful repair in erasure networks over different values of $\beta$ and bandwidth overhead ratio $t/\beta$. When the number of packet transmissions tends to infinity, $t/\beta \rightarrow 1/(1-p)$ for successful repair. However for a finite number of packet transmissions, larger number of packets ($t$) have to be transmitted for successful repair. The figure also compares the practical repair bandwidth for $P_s=0.9$ (regarding $\delta=0.1$). The larger $\beta$, the lower the bandwidth overhead ratio, $t/\beta$.}
 \label{Fig:3D-Ps}
\end{figure}

Then, we shall propose a method to reduce the practical repair-bandwidth. We will first show that the problem has a combinatorial optimization structure.  For illustration, we give an example, where two repair schemes are compared. Consider a distributed storage system with parameters $(n=10,k=5, d_{\mathrm{tot}}=9, M=70)$, where $d_{\mathrm{tot}}$ denotes the total number of surviving nodes in the repair. Suppose there are two schemes of exploiting these $9$ surviving nodes in the repair. For the first scheme, a regenerating code  $(n,k,d_{\mathrm{tot}},\alpha,\gamma(d_{\mathrm{tot}}),M)$ is used and then each of $d_{\mathrm{tot}}$ surviving nodes sends $t$ packets formed by linear combinations of $\beta(d_{\mathrm{tot}})$ repair packets. In this scheme, failure in receiving $\beta(d_{\mathrm{tot}})$ packets from only one link causes failure in the repair. In the second scheme, a regenerating code $(n,k,d_1,\alpha,\gamma(d_{1}),M)$ for $d_1< d_{\mathrm{tot}}$ is used, and each of $d_{\mathrm{tot}}$ surviving nodes send $t$ packets formed by linear combinations of $\beta(d_1)$ repair packets. Successfully receiving $\beta(d_{1})$ packets from $d_1$ out of $d_{\mathrm{tot}}$ links is sufficient for the repair.  The second scheme requires larger asymptotic repair bandwidth ($d_1< d_t$ thus $\gamma^{'}(d_1) \geq \gamma^{'}(d_{\mathrm{tot}}$)). Let $d_1$ denote the number of surviving nodes to the aim of reducing asymptotic repair-bandwidth, and $d_2$ denote the number of surviving nodes sending redundant data.   For the first scheme, we use an MBR code for $d_1=9,d_2=0$. From (\ref{Eq-MBR-p}), we can calculate the asymptotic repair-bandwidth $\gamma^{'}_1=25.65$ (corresponding to $\beta^{'}_1=2.85$). In the second scheme, we use an MBR code for parameter $d_1=7$ and then $d_2=2$. From (\ref{Eq-MBR-p}), we have $\gamma^{'}_2=36$ (corresponding to $\beta^{'}_2=4$). In the second scheme, two surviving nodes transmit redundant data such that successful receiving packets from at least $7$ out of these $9$ links yield successful repair. In this case, the PSR can be evaluated as
\begin{eqnarray}
 P_s=\sum_{i=7}^{9} \binom{9}{i} (P_{\beta})^i (1-P_{\beta})^{9-i}.
 \end{eqnarray}
 The PSR for these two schemes have been compared in Fig. \ref{Fig:comparePracticalBandwidth}. Moreover, the practical repair-bandwidth for $\delta=0.01$ has been compared between these two methods. We see the scheme with smaller asymptotic optimal repair-bandwidth has almost two times larger practical repair-bandwidth than the other scheme. 

In a general case, consider a repair process in a distributed storage system. Packets on the links are erased i.i.d. with a  probability $p$. Successful receiving $\beta$  packets from each of $d_1$ links guarantees successful repair. Note that if the new node receives fewer than $\beta$  packets for a helper node, the received packets from this nodes cannot be used. To increase the PSR, $d_2$ number of surviving nodes transmit redundant data for the repair. In total there are $d_{\mathrm{tot}}$ surviving nodes. Each surviving node still transmits  $t$ packets, each of which is formed by a linear combination of $\beta(d_1)$ repair packets. Hence, the PSR is that the new node receives  from at least $d_1$  out of $d_1+d_2$ surviving nodes, which is calculated by
 \begin{eqnarray}
 P_s=\sum_{i=d_1}^{d_1+d_2} \binom{d_1+d_2}{i} (P_{\beta})^i (1-P_{\beta})^{d_1+d_2-i}.
 \end{eqnarray}
 Given the constraint that the PSR is greater than $1-\delta$, we minimize the practical repair-bandwidth $\widehat{\gamma}(\delta,d_1+d_2)$ by changing the value of $d_1$ and $d_2$. The optimization problem can be formulated as follows,
\begin{eqnarray}
\min \limits_{d_1,d_2} &  \widehat{\gamma}(\delta,d_1+d_2) \\
\text{subject to:} & P_s \geq 1-\delta,\\
& d_1+d_2 \leq d_{\mathrm{tot}}.
\label{max-ps}
\end{eqnarray}

\begin{figure}
 \centering
 \psfrag{o1}[][][1.5]{asymptotic optimal for $d_1=9$ }
  \psfrag{o2}[][][1.5]{asymptotic optimal for $d_1=7$}
   \psfrag{p1}[][][1.5]{practical for $d_1=7$,$d_2=2$ }
   \psfrag{p2}[][][1.5]{practical for $d_1=9$,$d_2=0$ }
   \psfrag{xlabel}[][][2.0]{ $t$ }
    \psfrag{ylabel}[][][2.0]{ $P_s$ }
    \psfrag{data1datadatadatadata}[][][1.5]{ $P_s \text{ for } d_1=9$ }
    \psfrag{data2datadatadatadata}[][][1.5]{ $P_s \text{ for } d_1=7, d_2=2$ }
 \resizebox{8cm}{!}{\epsfbox{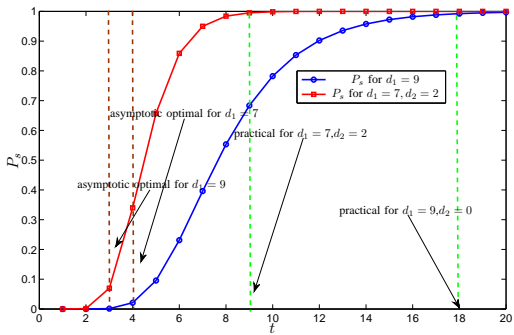}}
\caption{Probability of successful repair in erasure networks for $n=10, k=5$ over the repair-traffic from each node. The figure compares practical repair-bandwidth regarding $\delta=0.01$ versus optimal repair-bandwidth. }
 \label{Fig:comparePracticalBandwidth}
\end{figure}

Solving the optimization problem on the previous example shows that there is no  unique solution for all cases. That is,  the optimal repair-bandwidth approach depends on  the probability of packet erasure on the links. For illustration, we use the optimization problem in the previous example and find the corresponding $d_1$ and $d_2$  for the different values of packet erasure probabilities. The values of $d_1$ and $d_2$ that minimize the repair-bandwidth are shown in Fig. \ref{Fig.optimum_d_and_d2}. We can conclude that for the network with higher erasure probabilities, more helper nodes (larger $d_2$) should be used to increase the probability of successful repair.  Conversely, for the network with lower erasure probabilities,  less redundant data is needed, and the optimal practical repair-bandwidth is closer to the optimal asymptotic repair-bandwidth.

\begin{figure}
 \centering
 \psfrag{datadatadatadata1}[][][2.0]{ $d_1$ }
 \psfrag{datadatadatadata2}[][][2.0]{ $d_2$ }
 \psfrag{title}[][][1.5]{}
  \psfrag{ylabel}[][][2.5]{ number of storage nodes  }
 \psfrag{xlabel}[][][3]{ $p$  }
\resizebox{8cm}{!}{\epsfbox{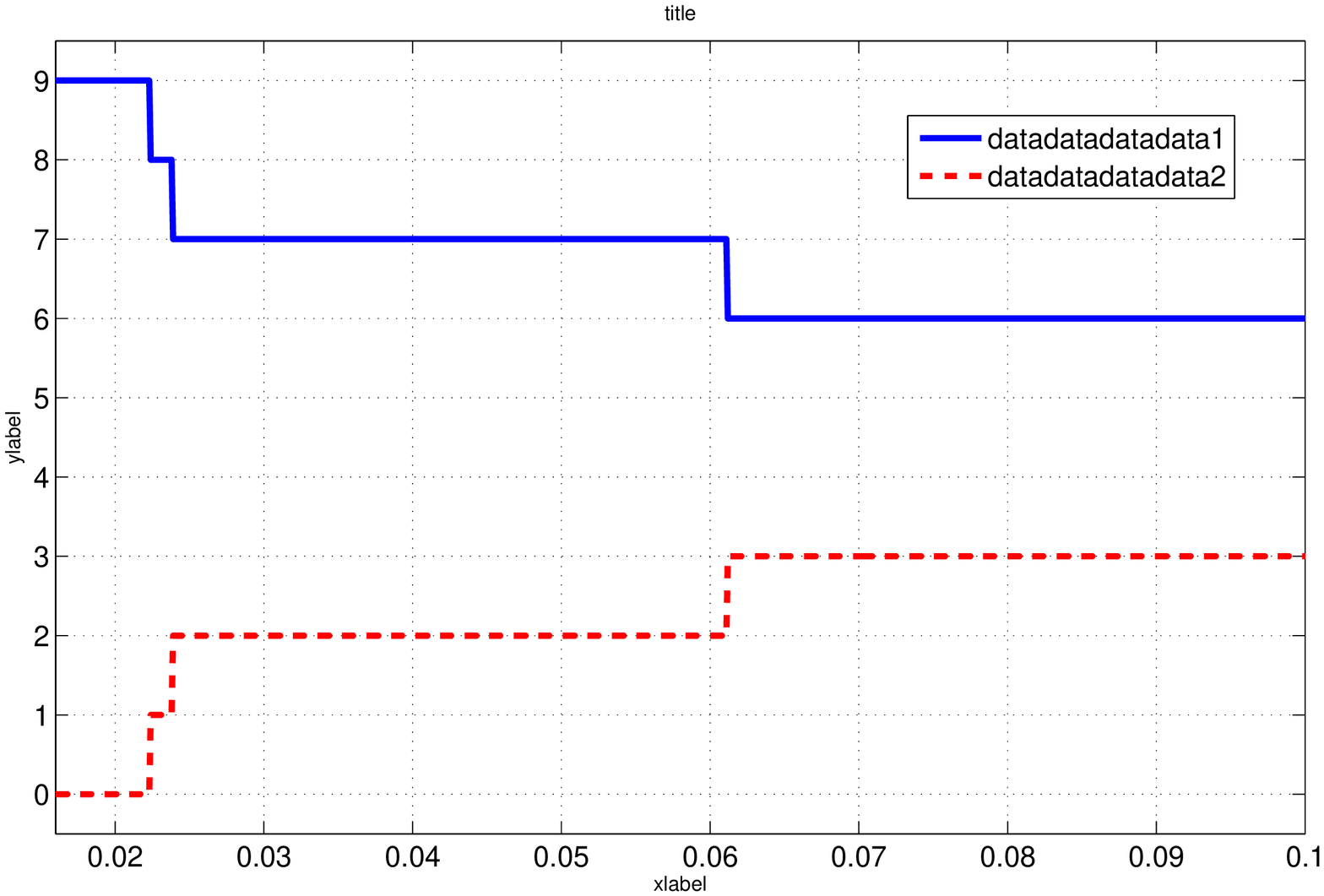}}
\caption{The value of $d_1$ and $d_2$ that minimize the practical repair-bandwidth over different values of links packet erasure probability, $p$.  Here, $\delta=0.0001$, and $p$ changes from $0.01$ to $0.1$. For a network with higher $p$ the reliability of repair becomes more critical and then  repair-bandwidth is minimized when the  redundant information is increased ($d_2$ increases).}
 \label{Fig.optimum_d_and_d2}
\end{figure}
\section{Conclusions} \label{sec:conclusion}

We studied the regeneration problem for distributed storage systems where channels are unreliable.
We investigated the storage-bandwidth tradeoff and the optimal  repair-bandwidth for packet erasure networks. We showed that repairing storage nodes can reduce the repair-bandwidth in packet erasure networks. We also studied the minimal storage for the repairing storage nodes.  We investigated the probability of successful repair and the approach to reduce the repair-bandwidth when number of transmitted packets is finite. We showed that the optimal repair-bandwidth depends on the channel-erasure probabilities. In this paper, we assume that channels have non-bursty packet losses. Studying  the repair problem in a network with bursty packet losses can be an interesting topic for future work.


\appendix
\subsection{Proof of Proposition \ref{pro:AchievabilityMSR}}
The  proof is based on random linear coding. By exploiting  sparse-zero lemma \cite{Raymond}, we show for large finite field size there exist linear codes for the repair problem.
\begin{lem}[sparse-zero lemma]\label{SPZ}  Consider a
multi-variable polynomial $g(\alpha_1,\alpha_2,...,\alpha_n)$ which is not identically  zero, and has the maximum degree
  in each variable at most $d_0$. Then, there exist variables  $\gamma_1,\gamma_2,...,\gamma_n$ in the finite field $\mathrm{GF(q)}$, for $q \geq d_0$, such that $g(\gamma_1,\gamma_2,...,\gamma_n)\neq 0$.
 \end{lem}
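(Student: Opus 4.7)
The plan is to prove the sparse-zero lemma by induction on the number of variables $n$, which is the standard approach for this Schwartz--Zippel type statement.

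For the base case $n=1$, I would use the fact that a nonzero univariate polynomial over a field $\mathrm{GF}(q)$ of degree at most $d_0$ has at most $d_0$ roots (a standard consequence of polynomial division in a field). Provided the field has more than $d_0$ elements, there must exist some $\gamma_1 \in \mathrm{GF}(q)$ which is not a root, and hence $g(\gamma_1) \neq 0$.

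For the inductive step, assume the lemma holds for all polynomials in $n-1$ variables with per-variable degree at most $d_0$. Given $g(\alpha_1,\ldots,\alpha_n)$ not identically zero, I would group terms by the power of $\alpha_n$ and write
\begin{equation}
g(\alpha_1,\ldots,\alpha_n) \;=\; \sum_{i=0}^{d_0} \alpha_n^{\,i} \; g_i(\alpha_1,\ldots,\alpha_{n-1}),
\end{equation}
where each $g_i$ is a polynomial in $n-1$ variables with per-variable degree at most $d_0$. Since $g$ is not identically zero, at least one coefficient polynomial $g_{i^\star}$ is not identically zero. By the inductive hypothesis, there exist $\gamma_1,\ldots,\gamma_{n-1} \in \mathrm{GF}(q)$ such that $g_{i^\star}(\gamma_1,\ldots,\gamma_{n-1}) \neq 0$. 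Substituting these values into $g$ yields a univariate polynomial in $\alpha_n$ of degree at most $d_0$ whose $i^\star$-th coefficient is nonzero, hence the univariate polynomial itself is nonzero. Applying the base case then produces a $\gamma_n \in \mathrm{GF}(q)$ with $g(\gamma_1,\ldots,\gamma_{n-1},\gamma_n) \neq 0$, completing the induction.

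The only delicate step is the base case, where the field-size requirement matters: to guarantee a non-root exists one needs $q$ strictly larger than the number of roots, so the argument really requires that $\mathrm{GF}(q)$ contain more elements than the degree of the univariate restriction. Once this quantitative bound is pinned down, the rest of the induction is essentially bookkeeping, since the per-variable degree bound $d_0$ is preserved when passing from $g$ to any coefficient $g_i$ and the nonvanishing of $g$ as a formal polynomial transfers to nonvanishing of some $g_i$ as a formal polynomial.
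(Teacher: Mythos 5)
Your induction on the number of variables is correct and is precisely the standard argument for this lemma; the paper itself gives no proof, deferring entirely to Lemma 19.17 of Yeung's book \cite{Raymond}, whose proof is this same induction (base case via the root-counting bound for univariate polynomials, inductive step via expanding in the last variable and invoking the hypothesis on a nonzero coefficient polynomial). You are also right to flag the quantitative condition: the base case requires $q$ strictly greater than the degree of the univariate restriction, so the hypothesis as stated in the paper, $q \geq d_0$, is off by one and should read $q > d_0$ (equivalently $q \geq d_0 + 1$), which is how the cited source states it.
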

\begin{proof} See proof of Lemma 19.17 in \cite{Raymond}. \end{proof}
 We shall give the code construction with the minimum repair-bandwidth. We split the source file of a size $M$ into $k$ fragments. We denote the source file by vector $\mathbf{x}=[x_1, x_2, \cdots, x_{k(n-k+1)}]^T$. Substituting these set of parameters $(d=n-1,h=1,M=k(n-k+1))$ in Eq. (\ref{Eq-MSR+h}), we have $\beta=1$ . We construct an $(n,k)\texttt{-MDS}$ code using the following Vandermonde matrix, as a generator matrix, $G$,
\begin{eqnarray}
\mathbf{G}=  \left(\begin{array}{ccccc}
1 & \alpha_1 & \alpha_1^2 & \cdots & \alpha_1^{c-1}\\
1 & \alpha_2 &  \alpha_2^2 & \cdots & \alpha_2^{c-1}\\
\vdots & \vdots &  \vdots & \ddots & \vdots\\
1 & \alpha_r &  \alpha_r^2 & \cdots & \alpha_r^{c-1}\\
\end{array}\right),\end{eqnarray}
where  $r=n(n-k+1)+1$, $c=k(n-k+1)$, and  $\alpha_i$s for $i \in \{ 1,\cdots, r\}$ are distinct elements from the finite field $\mathrm{GF(q)}$. $q$ is the design parameter and should be select properly (for the Vandermonde matrix $q\geq r$). Now, without loss of generality, assume rows $i$ till $i+n-k$ in matrix $\mathbf{G}$ represent the code on node $i$, and denoted as $\mathbf{Q}_i$. Also the last row in matrix $\mathbf{G}$ represent the code for the repairing storage node, and is denoted as $\mathbf{q}_1$.

Suppose node $s_1$ fails. We show that the  code on new node, denoted as $\mathbf{Q}_{s_1}^{'}$, can be constructed such that,
\begin{eqnarray}
\det([\mathbf{Q}^{'}_{s_1}, \mathbf{Q}_{s_2},\cdots,\mathbf{Q}_{s_k}]\neq 0.
\label{Formula:nonZeroPoly}
\end{eqnarray}
By the construction of matrix $\mathbf{G}$, the code vectors of selecting $k-1$ nodes (let say nodes $s_2,\cdots, s_{k-1}$) among $n$ storage nodes (that results $(n-k+1)(k-1)$ vectors), and one vector from $d=n-k+1$ surviving nodes (that results $d-k+1=n-k$ vectors) plus one vector from the repairing storage node, which results in total $(n-k+1)(k-1)+(n-k)+1=k(n-k+1)$ vectors, are full rank. This means that, the left-hand side in equation (\ref{Formula:nonZeroPoly}) is not identically zero. Consequently,  we can use Lemma \ref{SPZ} to deduce that there exist linear codes for large enough finite field size. This finalizes the proof.

\subsection{Proof of Proposition \ref{pro:AchievabilityMBR}}
Let the $M$ dimensional vector $\mathbf{s}$ denote the source file. Also, let $\mathbf{x}_i$ and $\mathbf{Q}_i$  denote the vector containing data on node $i$ and the code on node $i$, respectively. Then $\mathbf{x}_i=\mathbf{Q}_i \mathbf{s}$. Matrix $\mathbf{Q}_i$ has $\alpha \times M$ size and $\mathbf{x}_i$ is a vector of size $\alpha$. We first construct an MBR code in a DSS with parameters $(n+1,k,d+1,\alpha=(d+1)\beta,\beta=2M/(k(2d-k+3)),M)$. The first part is without any repairing storage node, and the code can be constructed based on the path weaving approach in \cite{Wu01}. Then, we store in $n$ storage nodes the designed code. In the second step, we store proper data on the repairing storage node. We store $k\beta$ coded packets  on the memory of repairing storage node in this way: assuming we are in repair stage $i$, for $i \in [k]$, we store on the repairing storage node $\beta$ repair packets corresponding to the failure in stages $i=1,\cdots,k$. That means at repair stage $i$ ($i \in [k]$), the repairing storage node only transfers $\beta$ packets related to the corresponding failed node. For stages $i > k$, the repairing storage node sends a linear combination of its $k \times \beta$ stored data vectors.

We shall show that any repairing traffic that one node sent at stage $i> k$ can be stated as a linear combination of  repairing traffic that the same node have already sent in stages $i=1,\cdots,k$. In other words, the repair traffic in any stage is a linear combination of  $k\beta$ stored data. To prove that, consider that every $k$ nodes can recover the source file. Hence, we can state the code on any node by a linear combination of codes on nodes $i=1,\cdots,k$. Since repairing traffic is also a linear combination of codes on storage nodes, then the repair traffic one node sent in stage $i>k$ can be stated as a linear combination of data vectors in stages $i=1,\cdots,k$. This finalizes the proof.

\end{document}